  \newcommand{\fullversion}[1]{#1}
\newcommand{\submversion}[1]{}
\newcommand{\bra}[1]{\langle #1|}
\newcommand{\ket}[1]{|#1\rangle}
\newcommand{\braket}[2]{\langle #1|#2\rangle}
\newcommand{\ketbra}[2]{|#1\rangle\langle #2|}
\newtheorem{theorem}{Theorem}
\newtheorem{definition}{Definition}
\newtheorem{lemma}{Lemma}
\newtheorem{claim}{Claim}
\newtheorem{remark}{Remark}
\newtheorem{corollary}{Corollary}
\newcommand{\secparam}{\lambda}
\newcommand{\negl}{\mathsf{negl}}
\newcommand{\ignore}[1]{}
\newcommand{\cktclass}{\mathcal{C}}
\newcommand{\poly}{\mathrm{poly}}
\newcommand{\distr}{\mathcal{D}}
\newcommand{\gen}{\mathsf{Gen}}
\newcommand{\sign}{\mathsf{Sign}}
\newcommand{\ver}{\mathsf{Ver}}
\newcommand{\ch}{\mathsf{Ch}}
\newcommand{\expt}{\mathsf{Expt}}
\newcommand{\adversary}{\mathcal{A}}  % Adversary
\newcommand{\hybrid}{\mathsf{Hyb}} % Hybrid
\newcommand{\prob}{\mathsf{Pr}} % Probability
\newcommand{\setup}{\mathsf{Setup}}
\newcommand{\keygen}{\mathsf{KeyGen}}
\newcommand{\enc}{\mathsf{Enc}}
\newcommand{\dec}{\mathsf{Dec}}
\newcommand{\msk}{\mathsf{MSK}}
\newcommand{\pk}{\mathsf{PK}}
\newcommand{\ct}{\mathsf{CT}}
\newcommand{\pke}{\mathsf{PKE}}
\newcommand{\fe}{\mathsf{FE}}
\newcommand{\eval}{\mathsf{Eval}}
\newcommand{\sk}{\mathsf{SK}}
\newcommand{\tr}{\mathsf{Tr}}
\newcommand{\fclass}{\mathcal{F}}
\newcommand{\cp}{\mathsf{CopyProtect}}
\newcommand{\uniform}{\xleftarrow{\$}}
\newcommand{\given}{\mid}
\newcommand{\ue}{\mathsf{UE}}
\newcommand{\vardbtilde}[1]{\tilde{\raisebox{0pt}[0.85\height]{$\tilde{#1}$}}}
\newcommand{\pubkue}{\mathsf{PBKUE}}
\newcommand{\ske}{\mathsf{SKE}}
\newcommand{\alice}{\mathcal{A}}
\newcommand{\bob}{\mathcal{B}}
\newcommand{\charlie}{\mathcal{C}}
\newcommand{\abc}{(\alice,\bob,\charlie)}
\newcommand{\abcprime}{(\alice',\bob',\charlie')}
\newcommand{\tldabc}{(\vardbtilde{\alice}, \vardbtilde{\bob}, \vardbtilde{\charlie})}
\newcommand\numberthis{\addtocounter{equation}{1}\tag{\theequation}}
\newcommand{\id}{\mathbf{id}}
\newcommand{\epr}[1]{\mathsf{EPR}_{#1}}
\newcommand{\E}{\mathbb{E}}
\newcommand{\forger}{\mathsf{Forger}}
\newcommand{\trD}[2]{T(#1, #2)}
\title{Unclonable Encryption, Revisited}
\author{Prabhanjan Ananth\\ UCSB\\ \texttt{prabhanjan@cs.ucsb.edu} \and Fatih Kaleoglu\\ UCSB\\ \texttt{kaleoglu@ucsb.edu}}
\date{}
\begin{document}

\maketitle

\begin{abstract}
\noindent Unclonable encryption, introduced by Broadbent and Lord (TQC'20), is an encryption scheme with the following attractive feature: given a ciphertext, an adversary cannot create two ciphertexts both of which decrypt to the same message as the original ciphertext. 
\par We revisit this notion and show the following: 
\begin{enumerate} 

\item {\bf Reusability}: The constructions proposed by Broadbent and Lord have the disadvantage that they either guarantee one-time security (that is, the encryption key can only be used once to encrypt the message) in the plain model or they guaranteed security in the random oracle model. We construct unclonable encryption schemes with semantic security. We present two constructions from minimal cryptographic assumptions: (i) a private-key unclonable encryption scheme assuming post-quantum one-way functions and, (ii) a public-key unclonable encryption scheme assuming a post-quantum public-key encryption scheme. 

\item {\bf Lower Bound and Generalized Construction}: We revisit the information-theoretic one-time secure construction of Broadbent and Lord. The success probability of the adversary in their construction was guaranteed to be $0.85^n$, where $n$ is the length of the message. It was interesting to understand whether the ideal success probability of (negligibly close to) $0.5^n$ was unattainable. We generalize their construction to be based on a broader class of monogamy of entanglement games (while their construction was based on BB84 game). We demonstrate a simple cloning attack that succeeds with probability $0.71^n$ against a class of schemes including that of Broadbent and Lord. We also present a $0.75^n$ cloning attack exclusively against their scheme.

\item {\bf Implication to Copy-Protection}: We show that unclonable encryption, satisfying a stronger property, called unclonable-indistinguishability (defined by Broadbent and Lord), implies copy-protection for a simple class of unlearnable functions. While we currently don't have encryption schemes satisfying this stronger property, this implication demonstrates a new path to construct copy-protection. 

\end{enumerate}

\end{abstract}

\newpage
\tableofcontents
\newpage

\section{Introduction}
Quantum mechanics has led to the discovery of many fascinating cryptographic primitives~\cite{wiesner83,Aar09,BGS13,Zhandry19,AGKZ20,BI20,GZ20,ALP20c,ALLZZ20} that are simply not feasible using classical computing. A couple of popular primitives include quantum money~\cite{wiesner83} and quantum copy-protection~\cite{Aar09}. We study one such primitive in this work. \par Inspired by the work of Gottesman~\cite{Got02} on tamper detection, Broadbent and Lord introduced the beautiful notion of unclonable encryption~\cite{BL19}. This notion is an encryption scheme that has the following attractive feature: given any encryption of a classical message $m \in \{0,1\}^*$, modeled as a quantum state, the adversary should be unable to generate multiple ciphertexts that encrypt to the same message. Formally speaking, the unclonability property is modeled as a game between the challenger and the adversary. The adversary consists of three algorithms, denoted by Alice, Bob and Charlie. The challenger samples a message $m$ uniformly at random and then sends the encryption of $m$ to Alice, who then outputs a bipartite state. Bob gets a part of this state and Charlie gets a different part of the state. Then the reveal phase is executed: Bob and Charlie each independently receive the decryption key. Bob and Charlie -- who no longer can communicate with each other -- now are expected to guess the message $m$ simultaneously. If they do, we declare that the adversary succeeds in this game. An encryption scheme satisfies unclonability property if any adversary succeeds in this game with probability at most negligible in the length of $m$. Note that the no-cloning principle~\cite{WZ82} of quantum mechanics is baked into this definition since if it were possible to copy the ciphertext, Alice can send this ciphertext to both Bob and Charlie who can then decrypt this using the decryption key (obtained during the reveal phase) to obtain the message $m$.  
\par Broadbent and Lord proposed two novel constructions of unclonable encryption. The drawback of their information-theoretic scheme is that it only guaranteed one-time security. This means that the encryption key can only be used to encrypt one message, after which the key can no longer be used to encrypt messages without compromising on security. On the other hand, their second scheme does provides reusable security, albeit only in the stronger random oracle model. Another (related) drawback is that their schemes were inherently private-key schemes, meaning that only the entity possessing the private encryption key could compute the ciphertext. 

\subsection{Our Work}

\paragraph{Reusability.} We revisit the notion of unclonable encryption of~\cite{BL19} and present two constructions. Both of our constructions guarantee semantic security; no information about the message is leaked even if the key is reused. The first construction is a private-key scheme (the encryption key is private) while the second construction is a public-key scheme (the encryption key is available to everyone). 

\begin{theorem}[Informal]
Assuming post-quantum one-way functions\footnote{A function $f$ is one-way and post-quantum secure if given $f(x)$, where $x \in \{0,1\}^{\secparam}$ is sampled uniformly at random, a quantum polynomial-time (QPT) adversary can recover a pre-image of $f(x)$ with probability only negligible in $\secparam$.}, there exists a private-key unclonable encryption scheme. 
\end{theorem}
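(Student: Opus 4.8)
The plan is to upgrade the one-time scheme of Broadbent and Lord to the reusable setting by deriving the per-ciphertext encryption from a post-quantum pseudorandom function (post-quantum one-way functions give us a post-quantum, indeed \emph{puncturable}, PRF $\prffunc$). Let $(\mathsf{OTEnc},\mathsf{OTDec})$ be a one-time uncloneable encryption scheme (e.g.\ the Broadbent--Lord scheme, which is essentially unconditional). The private key is a PRF key $K$. To encrypt $m$, sample a fresh tag $t \la \{0,1\}^{\secparam}$ and output $\mathsf{Enc}(K,m) = \big(t,\ \mathsf{OTEnc}(m;\prffunc_K(t))\big)$, where $\prffunc_K(t)$ supplies the one-time key and coins; decryption recomputes $\prffunc_K(t)$ and runs $\mathsf{OTDec}$. (To keep the reveal phase meaningful, the tag-derived material should only be the part of the one-time key that an honest decryptor — hence Bob and Charlie — is entitled to see; any genuinely secret randomness inside $\mathsf{OTEnc}$, such as the Wiesner string itself, is drawn freshly and is never recomputable from $K$.) Correctness is immediate, and since distinct ciphertexts use distinct tags except with negligible probability, each message is effectively encrypted under an independent one-time instance.

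For uncloneable security I would run a hybrid argument over the game in which Alice has an encryption oracle, then receives a challenge ciphertext for a uniformly random $m^*$, outputs a bipartite state, and finally Bob and Charlie each receive the private key together with their half of that state. Hybrid~1 aborts if the challenge tag $t^*$ collides with any of the polynomially many oracle-query tags (negligible loss by the birthday bound). Hybrid~2 replaces $\prffunc_K(t^*)$, used only to produce the challenge ciphertext, by fresh uniform randomness, so that the challenge ciphertext becomes a genuine one-time encryption of a random message, independent of the key material used to answer oracle queries. Hybrid~3 observes that the oracle-query ciphertexts are then simulatable from the queried messages and the (now independent) remaining key material, so they form harmless auxiliary input; invoking the one-time uncloneable security of $(\mathsf{OTEnc},\mathsf{OTDec})$ then bounds the probability that Bob and Charlie both output $m^*$ by a negligible function.

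The step I expect to be the main obstacle is the passage to Hybrid~2: the security game hands the \emph{entire} key $K$ to Bob and Charlie, so they can recompute $\prffunc_K(t^*)$, and the textbook ``PRF $\Rightarrow$ random function'' substitution is unavailable — any such substitution must be made consistent with what the honest decryptor, hence Bob and Charlie, computes from $K$. The intended remedy is to exploit puncturability: in the reduction, answer all oracle queries with the punctured key $K\{t^*\}$, and when the key is to be revealed hand Bob and Charlie the punctured key together with the value $\prffunc_K(t^*)$ — which is functionally identical to $K$ for running $\mathsf{Dec}$ on every ciphertext — so that replacing that single value by a uniform string is justified by puncturable-PRF security against adversaries holding $K\{t^*\}$. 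The delicate bookkeeping is to argue that presenting the revealed key in this punctured-plus-value form is faithful to the scheme rather than a silent weakening of the adversary; one clean way is to specify the scheme so that its secret key is, by construction, stored and revealed in a form from which exactly this decomposition is available, so the two presentations coincide as distributions. If $(\mathsf{OTEnc},\mathsf{OTDec})$ is instantiated with the quantum-random-oracle variant of Broadbent--Lord (needed once messages exceed one bit), Hybrid~3 additionally routes through the one-way-to-hiding lemma to move from ``Bob and Charlie both output $m^*$'' to ``Bob and Charlie both query the oracle at the Wiesner string,'' but this is orthogonal to the reusability argument.
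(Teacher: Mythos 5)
There is a genuine gap, and you have in fact located it yourself: the puncturable-PRF remedy for Hybrid~2 does not close the hole it is meant to close. In the cloning game Bob and Charlie receive the secret key as a \emph{string} and apply an arbitrary POVM $B_k$ depending on it; replacing the revealed key $K$ by the pair $(K\{t^*\},\mathsf{PRF}_K(t^*))$ changes the distribution of their input, and ``functionally identical for running $\mathsf{Dec}$'' is not enough --- an adversary is free to act on the representation, not the function, so the hybrid only bounds adversaries that happen to be fed the punctured form. Your proposed patch (define the scheme so the key is stored in exactly this decomposed form) cannot work, because the puncture point $t^*$ is chosen at encryption time, after the key is fixed; there is no single stored decomposition that is simultaneously valid for every possible challenge tag. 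The problem is even starker when you set up the actual reduction to one-time uncloneable security: the reduction only learns the external one-time key $k_{UE}$ in phase~2 and must then reveal to Bob and Charlie a full key $K$ satisfying $\mathsf{PRF}_K(t^*)=k_{UE}$; for a PRF such a $K$ generically does not exist, let alone be efficiently findable. So Hybrid~2 as written is not justified by any property of (puncturable) PRFs.

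The missing idea, which is how the paper proceeds, is to build the reveal-time consistency into the key via an \emph{equivocation} mechanism rather than via puncturing. The paper uses a KEM/DEM-style hybrid --- encrypt a fresh one-time UE key $k_{UE}$ under a reusable private-key scheme, and encrypt $m$ under $k_{UE}$ --- where the outer scheme satisfies a ``fake-key property'': its key is $(k,otp)$ with $otp$ a uniform one-time pad folded into every ciphertext, so that given any ciphertext $ct^0$ (an encryption of $0$) and any target $m$, one can sample a fake key $(k',otp')$ \emph{in the honest key format} with $(ct^0,(k',otp'))$ distributed identically to $(ct^m,(k,otp))$ and with $ct^0$ decrypting to $m$ under the fake key. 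This is exactly the step your puncturing argument cannot deliver: the revealed key is a legitimately distributed key of the scheme, not a syntactically different object, and the freshness needed to reprogram it lives in $otp$ rather than in the PRF. (Two smaller points: the paper's uncloneability game gives Alice only the single challenge ciphertext, with reusability captured by a separate multi-message semantic-security requirement, so your oracle-handling hybrids are proving a stronger statement than needed; and your observation that Bob-tilde and Charlie-tilde must coordinate on the reprogrammed key is handled in the paper by having Alice-tilde distribute shared randomness for $FakeGen$ in phase~1.) If you replace your Hybrid~2 with this fake-key/equivocation step, the rest of your outline --- reduce to the one-time Broadbent--Lord scheme --- goes through essentially as the paper does it.
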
 

\begin{theorem}[Informal]
Assuming the existence of post-quantum public-key encryption schemes\footnote{An encryption scheme is said to be a post-quantum public-key encryption scheme if any quantum polynomial-time (QPT) adversary can distinguish encryptions of two equal-length messages $m_0,m_1$ with only negligible probability.}, there exists a public-key unclonable encryption scheme. 
\end{theorem}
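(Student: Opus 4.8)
The plan is to use \emph{hybrid encryption} with a one‑time uncloneable scheme as the inner primitive. Let $\ue_1$ be the one‑time uncloneable encryption scheme of Broadbent--Lord (say, its conjugate‑coding variant), whose uncloneability for a uniformly random plaintext reduces to the monogamy‑of‑entanglement property of BB84 states. Given a post‑quantum public‑key encryption scheme $\pke$, define the reusable scheme by: $\keygen$ is that of $\pke$, producing $(\pk,\fkey)$; $\enc(\pk,m)$ samples a fresh one‑time key $k\la\ue_1.\keygen(1^\secparam)$ and outputs $\ct=\big(\pke.\enc(\pk,k),\,\ue_1.\enc(k,m)\big)$, the first component classical and the second quantum; $\dec(\fkey,\ct)$ recovers $k$ from the classical component using $\fkey$ and runs $\ue_1.\dec(k,\cdot)$ on the quantum component. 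Correctness and reusability are immediate, since a fresh $k$ is drawn per ciphertext, so $\fkey$ never ``ages'' against the one‑time scheme.

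For uncloneability the intuition is clean. Before the reveal phase the first‑stage adversary $\alice$ holds no secret key, so CPA security of $\pke$ hides $k$ from it; hence $\alice$ is effectively playing the one‑time game of $\ue_1$, after which one‑time uncloneability of $\ue_1$ prevents the non‑communicating receivers $\bob$ and $\charlie$ from both outputting $m$ except with negligible probability. The work is to realize this as a reduction $\reduction$ to one‑time uncloneability of $\ue_1$: $\reduction$ generates the public‑key parameters, receives the one‑time challenge state $\rho^*=\ue_1.\enc(k^*,m^*)$ \emph{without} $k^*$, must hand $\alice$ a full ciphertext $\ct^*=(c^*,\rho^*)$, and then, in the reveal phase, after being given $k^*$, must supply $\bob$ and $\charlie$ with a secret key under which $c^*$ decrypts to $k^*$; finally it forwards their guesses as its own.

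The main obstacle is precisely this last step: $c^*$ must simultaneously hide $k^*$ from $\alice$ (who lacks the secret key) and decrypt to $k^*$ for $\bob$ and $\charlie$ (who are handed the secret key in the reveal phase), so plain CPA security is not enough --- one cannot simply set $c^*=\pke.\enc(\pk,0^{|k^*|})$, since then $\bob$ and $\charlie$ recover the wrong key and the reduction learns nothing. The approach I would take is to instantiate the classical layer with an encryption scheme admitting equivocation of the receiver's view (receiver non‑committing encryption): $\reduction$ generates a special key pair, sets $c^*$ to a fake ciphertext independent of $k^*$, and upon learning $k^*$ runs the equivocation procedure to produce a secret key $\fkey^*$ for which $(\pk,c^*,\fkey^*)$ is indistinguishable from an honest $(\pk,\pke.\enc(\pk,k^*),\fkey)$, even to a QPT distinguisher. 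Granting this, uncloneability follows by a short hybrid argument: first switch key generation and $c^*$ to the equivocal/fake versions (indistinguishable by receiver equivocation, which still holds against QPT adversaries); then observe that in the resulting experiment $c^*$ serves only to deliver $k^*$ to $\bob$ and $\charlie$, so the experiment collapses to the one‑time uncloneability game of $\ue_1$, whose winning probability is negligible. (The same three steps work verbatim for the indistinguishability flavor of uncloneable security, reducing instead to indistinguishable‑uncloneability of $\ue_1$.) The point to nail down --- and where I expect the real effort --- is obtaining, from any post‑quantum CPA‑secure $\pke$, an encryption scheme with a receiver‑equivocation property strong enough for this reduction (or, alternatively, replacing the equivocation step by a direct argument tailored to the concrete Broadbent--Lord inner scheme, exploiting that its quantum component is a maximally mixed state to $\alice$); the remainder is bookkeeping, including checking that the chosen $\ue_1$ has the security flavor used here and that all reductions are QPT.
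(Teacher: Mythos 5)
Your architecture coincides with the paper's: hybrid encryption with a fresh one-time uncloneable key per ciphertext, and you correctly isolate the crux, namely that the classical component $c^*$ must hide $k^*$ from $\alice$ while, in the reveal phase, decrypting to $k^*$ under a secret key that the reduction can only produce \emph{after} learning $k^*$. What you call receiver equivocation is precisely what the paper calls the fake-key property. The gap is that you leave open the one step that carries the theorem's content: realizing this equivocation from an \emph{arbitrary} post-quantum public-key encryption scheme. Plain CPA security says nothing about the joint distribution of (ciphertext, secret key), and single-ciphertext receiver non-committing encryption from generic PKE is itself a nontrivial construction, so ``the remainder is bookkeeping'' understates what is missing; as written, the proposal reduces the theorem to an unproved primitive.

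The paper closes exactly this gap using single-key, simulation-secure public-key functional encryption, which is constructible from any post-quantum PKE via garbled circuits (Sahai--Seyalioglu; Gorbunov--Vaikuntanathan--Wee), together with the Trojan technique of Ananth et al. The public key is $\fe.\mpk$ and the secret key is a functional key for $F[ct](b,K,m')$, which outputs $\ske.\dec(K,ct)$ if $b=0$ and $m'$ otherwise, where $ct$ is a uniformly random string and $\ske$ has pseudorandom ciphertexts; honest encryption FE-encrypts $(1,\bot,k_{UE})$. The reduction first replaces $ct$ by $\ske.\enc(k_{SKE},k_{UE})$ (pseudorandomness of $\ske$ ciphertexts), then replaces the FE ciphertext by an encryption of $(0,k_{SKE},\bot)$ (legitimate for single-key FE security because $F[ct]$ evaluates to $k_{UE}$ on both plaintexts). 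After these two hybrids the FE ciphertext is independent of $k_{UE}$, and the functional key for $F[\ske.\enc(k_{SKE},k_{UE})]$ --- computable from $k_{SKE}$ and shared randomness once $k_{UE}$ is revealed --- is exactly your equivocated key $\fkey^*$, handed identically to $\bob$ and $\charlie$. To complete your proof you should either show that this FE-based scheme meets your receiver-equivocation definition, or supply a post-quantum receiver-equivocable PKE from generic PKE by some other route; until then the argument is a correct skeleton around an unfilled core.
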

\noindent We clarify that we show reusability only against distinguishing attacks and not cloning attacks. That is, the cloning attacker gets as input one ciphertext and in particular, does not get access to an encryption oracle. However, we do note that in the public-key setting, we can assume that the cloning adversary does not get access to the encryption oracle without loss of generality. Although in the private-key setting, a more delicate argument and/or construction is required.
\par Our constructions only guarantee computational security, unlike the previous scheme of Broadbent and Lord. However, our assumptions are the best one can hope for: (a) a private-key {\em unclonable} encryption scheme implies a post-quantum private encryption scheme (and thus, post-quantum one-way functions) and, (b) a public-key {\em unclonable} encryption scheme implies a public-key encryption scheme. There are candidates from lattices for both post-quantum one-way functions and post-quantum public-key encryption schemes; for example, see~\cite{Reg05}. 

\paragraph{Lower Bound and Generalized Construction.}
\noindent The first construction of~\cite{BL19}, \textit{conjugate encryption}, is based on the BB84 monogamy of entanglement game~\cite{Tomamichel_2013}, whose adversarial success probability is $\left( \frac{1}{2} + \frac{1}{2\sqrt{2}} \right) \approx 0.85$. In the hope of improving the bound, we present a simple generalization of their construction by showing a transformation from a broader class of monogamy games to unclonable encryption; whereas,~\cite{BL19} only showed the transformation for the BB84 monogamy game. \\

\noindent The optimal cloning adversary in conjugate encryption succeeds with probability $0.85^n$, whereas the ideal value would be negligibly close to $0.5^n$, where $n$ is the length of the messages, which is attainable trivially without cloning. A natural question to ask is if we can present a different analysis of their construction that gives the optimal bound. We show, in the theorem below, that this is not the case. 

\begin{theorem}[Informal]
    In a generalized conjugate encryption scheme which encrypts every bit of the message independently, a cloning adversary can succeed with probability at least $0.71^n$. 
\end{theorem}
\noindent The adversary that achieves this bound is simple: Alice clones the ciphertext with high fidelity using a generic cloning channel \cite{Bru__2000}. After learning the key, Bob and Charlie both try to honestly decrypt their state, and the output of the decryption matches the original message with significant probability for both of them. 
\par This adversarial construction inherently relies on the fact that the ciphertext (in qubits) is not larger than the message (in bits). For unclonable encryption schemes with large ciphertext size, it is infeasible to achieve a nontrivial bound using this technique. \\
\par The lower bound can be improved for conjugate encryption specifically using an adversary which blindly guesses part of the key before the splitting phase:

\begin{theorem}[Informal]
In conjugate encryption scheme of \cite{BL19}, a cloning adversary can succeed with probability $0.75^n$.
\end{theorem}

\paragraph{Implication to Copy-Protection.} We show how to use unclonable encryption to build quantum copy-protection~\cite{Aar09}. Roughly speaking, using a quantum copy-protection scheme, we can copy-protect our programs in such a way that an adversarial entity cannot create multiple versions of this copy-protected program. Recently, this notion has been revisited by many recent works~\cite{ALP20c,CMP20,ALLZZ20,KNY20,broadbent2021secure}. 
\par However, despite the recent progress, to date, we don't know of any provably secure constructions of copy-protection. We show how to use unclonable encryption to construct copy-protection for a specific class of point functions. This class consists of functions of the form $f_{a,b}(\cdot)$, where $b$ is a concatenation of the verification key and a signature on 0, that take as input $x$ and output $b$ if and only if $x=a$. This would not immediately yield a provably construction of copy-protection since we need the underlying unclonable encryption to satisfy a stronger property called {\em unclonable indistinguishability}  property (see \autoref{def:unclonableindistinguishable}) that are not currently satisfied by existing constructions of unclonable encryption. Nonetheless, this gives a new pathway to demonstrating provably secure constructions of quantum copy-protection.  

\begin{theorem}[Informal]
Assuming the existence of unclonable encryption scheme satisfying unclonable indistinguishability property and post-quantum one-way functions, there exists a quantum copy-protection scheme, satisfying computational correctness, for a special class of point functions. 
\end{theorem}

%\noindent We note that we can apply the  transformation of~\cite{CMP20} to generically transform the above copy-protection scheme into a copy-protection scheme for a subclass of compute-and-compare functions\footnote{A compute-and-compare function is of the following form: it takes as input $x$, computes a circuit $C$ on $x$, outputs $b$ if and only if $C(x)=a$.}. Specifically, every function in this subclass, parameterized by $(a,vk,\sigma)$ is of the following form: it takes as input $x$, computes a polynomial-time function on $x$ and outputs $(vk||\sigma)$ if and only if the output of the function on $x$ is $a$. Here, $vk$ denotes the verification key of a signature scheme and $\sigma$ is a signature. 
\noindent The resulting copy-protection guarantees a weaker correctness property called computational correctness property; informally, this says that any quantum polynomial-time adversary cannot come up with an input such that the copy-protected circuit is incorrect on this input. We note that such a correctness notion has been studied previously in the context of obfuscation~\cite{BLMZ19} (under the name computational functionality preservation). In addition to unclonable encryption, we use a post-quantum digital signature scheme that can be based on post-quantum one-way functions. 
 \par Our construction is inspired by a construction of secure software leasing by Broadbent, Jeffery, Lord, Podder and Sundaram~\cite{broadbent2021secure}. Conceptually, we follow the same approach suggested in their paper, except we replace the tool of quantum authentication codes \cite{BarnumetalQMAC02} with unclonable encryption. 
 \par Coladangelo, Majenz, and Poremba \cite{CMP20} also explore constructing copy-protection from unclonable encryption. They construct copy-protection for compute-and-compare programs\footnote{A compute and compare program implements a function $\mathsf{CC}[f,y]$ defined as:
\[ \mathsf{CC}[f,y](x) := \begin{cases} 1, \quad f(x) = y \\ 0, \quad f(x) \ne y \end{cases}. \] Point functions can be considered a special case when $f$ is the identity function.} (which subsumes point functions) in the QROM. Hence, whether unclonable-indistinguishable secure encryption can be constructed in the standard model is the key question in evaluating our contribution.  
  %give a copy-protection construction for compute-and-compare programs\footnote{A compute and compare program implements a function $\mathsf{CC}[f,y]$ defined as:
%\[ \mathsf{CC}[f,y](x) := \begin{cases} 1, \quad f(x) = y \\ 0, \quad f(x) \ne y \end{cases}. \] Point functions can be considered a special case when $f$ is the identity function.}. Their construction satisfies statistical correctness, but works in the quantum random oracle model (QROM). Hence, whether unclonable-indistinguishable secure encryption can be constructed in the standard model is the key question in evaluating our contribution. 

% \subsection{Related Works, Concurrent Works and Open Problems}
% \paragraph{Related Works}

\paragraph{Concurrent Works.} 
The work of Majenz, Schaffner and Tahmasbi~\cite{cryptoeprint:2021:408} study various limitations on unclonable encryption schemes. Specifically, they analyze lower bounds for the success probability of the adversary in any unclonable encryption scheme. In contrast, our lower bound targets specifically the conjugate encryption scheme of~\cite{BL19} and this allowed to present concrete lower bounds. 
\par Hiroka, Morimae, Nishimaki and Yamakawa~\cite{hiroka2021quantum} show how to make the key reusable for a different primitive called 
quantum encryption with certified deletion~\cite{BI20} using the same conceptual idea but different tools. We note that unclonable encryption implies quantum encryption with certified deletion if the certificate of deletion is allowed to be quantum. However, Hiroka et al.'s result achieves classical certification of deletion, which makes our results are incomparable.

\paragraph{Acknowledgements}
We thank the TCC 2021 committee for pointing out a simpler cloning attack against conjugate encryption.

%\paragraph{Open Problems}
%We give a list of interesting future work directions: \begin{enumerate}
   % \item 
%\end{enumerate}

\newcommand{\otue}{\mathsf{otUE}}
\newcommand{\rue}{\mathsf{rUE}}
\newcommand{\reuseenc}{\mathcal{E}}
\subsection{Technical Overview}
\noindent We present a high level overview of our techniques. 

\paragraph{Naive Attempt: A Hybrid Approach.} A naive attempt to construct an unclonable encryption scheme with reusable security is to start with two encryption schemes. 
\begin{itemize}
\item The first scheme is a (one-time) unclonable encryption scheme, as considered in the work of~\cite{BL19}. We denote this scheme by $\otue$.
\item The second scheme is a post-quantum encryption scheme guaranteeing reusable security but without any unclonability guarantees\footnote{As an example, we could use Regev's public-key encryption scheme~\cite{Reg05}.}. We denote this scheme by $\reuseenc$. 
\end{itemize}
\noindent At a high level, we hope that we can combine the above two schemes to get the best of both worlds: reusability and unclonability. 
\par In more detail, using $\otue$ and $\reuseenc$, we construct a reusable unclonable encryption scheme, denoted by $\rue$, as follows. Sample a decryption key $k_{\reuseenc}$ according to the scheme $\reuseenc$ and set the decryption key of $\rue$ to be $k_{\reuseenc}$. The encryption procedure of $\rue$ is defined as follows. To encrypt a message $m$, first sample a key $k_{\otue}$ according to the scheme $\otue$. Output the $\rue$ encryption of $m$ to be $(\ct_{\otue},\ct_{\reuseenc})$, where $\ct_{\otue}$ is an encryption of $m$ under the key $k_{\otue}$ and, $\ct_{\reuseenc}$ is an encryption of the message $k_{\otue}$ under the key $k_{\reuseenc}$. To decrypt, first decrypt $\ct_{\reuseenc}$ using $k_{\reuseenc}$ to obtain the message $k_{\otue}$. Using this, then decrypt $\ct_{\otue}$ to get the message $m$. 
\par How do we argue unclonability? Ideally, we would like to reduce the unclonability property of $\rue$ to the unclonability property of the underlying one-time scheme $\otue$. However, we cannot immediately perform this reduction. The reason being that $k_{\otue}$ is still encrypted under the scheme $\reuseenc$ and thus, we need to get rid of this key before invoking the unclonability property of $\otue$. To get rid of this key, we need to invoke the semantic security of $\reuseenc$. Unfortunately, we cannot invoke the semantic security of $\reuseenc$ since the decryption key of $\reuseenc$ will be revealed to the adversary and semantic security is trivially violated if the adversary gets the decryption key.
\par More concretely, Alice upon receiving $(\ct_{\otue},\ct_{\reuseenc})$ could first break $\ct_{\reuseenc}$ to recover $k_{\otue}$ and then decrypt $\ct_{\otue}$ using $k_{\otue}$ to recover $m$. Thus, before performing the reduction to $\rue$, we need to first invoke the security property of $\reuseenc$. Here is where we are stuck: as part of the security experiment of the unclonability property, we need to reveal the decryption key of $\rue$, which is nothing but $k_{\reuseenc}$, to Bob and Charlie after Alice produces the bipartite state. But if we reveal $k_{\reuseenc}$, then the security of $\reuseenc$ is no longer guaranteed.   

\paragraph{Embedding Messages into Keys.} To overcome the above issue, we require $\reuseenc$ to satisfy an additional property. Intuitively, this property guarantees the existence of an algorithm that produces a fake decryption key that has embedded inside it a message $m$ such that this fake decryption key along with an encryption of 0 should be indistinguishable from an honestly generated decryption key along with an encryption of $m$. 
\begin{quote}
    {\bf Fake-Key Property}: there is a polynomial-time algorithm $FakeGen$ that given an encryption of 0, denoted by $\ct_0$, and a message $m$, outputs a fake key $fk$ such that the distributions $\{(\ct_m,k_{\pke})\}$ and $\{(\ct_0,fk)\}$ are computationally indistinguishable, where $\ct_m$ is an encryption of $m$ and $k_{\pke}$ is the decryption key of $\pke$. 
\end{quote}
\noindent One consequence of the above property is that the decryption of $\ct_0$ using the fake decryption key $fk$ yields the message $m$. 
\par Using the above fake-key property, we can now fix the issue in the above hybrid approach. Instead of invoking semantic security of $\reuseenc$, we instead invoke the fake-key property of $\pke$. The idea is to remove $k_{\otue}$ completely in the generation $\ct_{\reuseenc}$ and only use it during the reveal phase, when the decryption key is revealed to both Bob and Charlie. That is, $\ct_{\reuseenc}$ is computed to be an encryption of 0 and instead of revealing the honestly generated key $k_{\reuseenc}$ to Bob and Charlie, we instead reveal a fake key that has embedded inside it the message $k_{\otue}$. After this change, we will now be ready to invoke the unclonability property of the underlying one-time scheme.

\paragraph{Instantiation: Private-Key Scheme.} We used a reusable encryption scheme $\reuseenc$ satisfying the fake-key property to construct an unclonable encryption satisfying reusable security. But does a scheme satisfying fake-key property even exist? 
\par We present two constructions: a private-key and a public-key encryption scheme satisfying fake-key property. We first start with a private-key encryption scheme. We remark that a slight modification of the classical private-key encryption scheme using pseudorandom functions~\cite{GolVol1} already satisfies this property\footnote{For the informed reader, this scheme can be viewed as a special case of a primitive called somewhere equivocal encryption~\cite{HOWW16}, considered in a completely different context.}. The encryption of a message $m$ using the decryption key $k_{\reuseenc}=(k,otp)$ is $\ct=(r,PRF_k(r) \oplus m \oplus otp)$, where $r \in \{0,1\}^{\secparam}$ is chosen uniformly at random, $\secparam$ is a security parameter and $PRF$ is a pseudorandom function. To decrypt a ciphertext $(r,\theta)$, first compute $PRF_k(r)$ and then compute $\theta \oplus PRF_k(r) \oplus otp$.
\par The fake key generation algorithm on input a ciphertext $\ct=(r,\theta)$ and a message $m$, generates a fake key $fk$ as follows: it first samples a key $k'$ uniformly at random and then sets $otp'$ to be $\theta \oplus PRF_{k'}(r) \oplus m$. It sets $fk$ to be $(k',otp')$. Note that $fk$ is set up in such a way that decrypting $\ct$ using $fk$ yields the message $m$. 

\paragraph{Instantiation: Public-Key Scheme.} We can present a construction of a public-key scheme using functional encryption~\cite{BSW11,ON10}, a fundamental notion in cryptography. A functional encryption (FE) scheme is an encryption scheme where the authority holding the decryption key (also referred to as master secret key) is given the ability to issue functional keys, of the form $sk_f$ for a function $f$, such that decrypting an encryption of $x$ using $sk_f$ yields the output $f(x)$. 
\par A first attempt to achieve fake-key property using FE is to design the fake key to be a functional key associated with a function, that has the message $m$, hardwired inside it. This function is a constant function that always ignores the input and outputs $m$. There are two issues with this approach: firstly, the fake key is a functional key whereas the real key is the master secret key of the functional encryption scheme. An adversary might be able to tell apart the fake key versus the real key and thus, break the security. Secondly, a public-key functional encryption does not guarantee function-hiding property -- the function description could be evident from the description of the functional key. This means that the adversary can read off the message $m$ from the description of the functional key. 
\par The first issue can be solved by making sure that even the real key is a functional key associated with the identity function. The second issue involves a little more work: instead of having $m$ in the clear in the description of the function, we instead hardwire encryption of $m$ in the function description. The decryption key for this ciphertext is encrypted inside the ciphertext of the FE scheme. Thus, we have two modes: (a) in the first mode, we encrypt $m$ using FE and the real key is a functional key associated with the identity function (this function has a dummy ciphertext hardwired inside it) and, (b) in the second mode, we encrypt $\widehat{k}$ using FE and the fake key is a functional key associated with a function, which has a ciphertext $c$ encrypting message $m$ hardwired inside it, that decrypts $c$ using $\widehat{k}$ and outputs the result. This trick is not new and is inspired by the Trojan technique~\cite{ABSV15} introduced in a completely different context. 
\par In the technical sections, instead of presenting a public-key encryption satisfying fake-key property using FE, we present a direct construction of public-key unclonable encryption scheme using FE.

\paragraph{Implication to Copy-Protection.} Next, we will show how to construct copy-protection for a specific class of point functions from unclonable encryption. A point function $f_{a,b}(\cdot)$ is represented as follows: it takes as input $x$ and outputs $b$ if $x=a$, otherwise it outputs $0$. Our approach is inspired by a recent work by Broadbent et al.~\cite{broadbent2021secure} who show how to construct a weaker version of copy-protection (called secure software leasing~\cite{ALP20c}) from quantum authentication codes. 
\par A first attempt to construct copy-protection, using unclonable encryption, is as follows: to copy-protect $f_{a,b}(\cdot)$, output an unclonable encryption\footnote{It suffices to use a one-time unclonable encryption scheme~\cite{BL19} here.} of $b$ under the key $a$; that is, $a$ is interpreted as the decryption key of the unclonable encryption scheme. We treat the ciphertext as the copy-protected version of $f_{a,b}(\cdot)$. To evaluate this copy-protected state on input $x$, run the decryption of this ciphertext with the key $x$. Output the result of the decryption algorithm.
\par If the input is $x=a$ then, by the correctness of unclonable encryption, we get the output $b$. However, if the input is not $a$, then we need the guarantee that the output is $0$ with high probability. Unfortunately, the properties of unclonable encryption fall short here. unclonable encryption does not have any guarantees if the ciphertext is decrypted using an invalid key. It could very well be the case that on input $a' \neq a$, the output of the copy-protection algorithm is $b$, thus violating the correctness guarantees. 
\par We use digital signatures to enforce the correctness property of the copy-protection scheme. We restrict our attention to a sub-class of point functions, where we interpret $b$ to be the concatenation of a verification key $vk$ and a signature $\sigma$ on 0. We subsequently modify the evaluation algorithm of the copy-protection scheme to output $(vk,\sigma')$ if and only if the decryption algorithm of unclonable encryption yields $(vk,\sigma')$ and moreover, $\sigma'$ is a valid signature on 0. This still does not guarantee the fact that the copy-protection scheme satisfies correctness. The reason being that on an input $a' \neq a$, the output could still be a valid signature on 0. Fortunately, this satisfies a weaker but still useful notion of correctness called computational correctness. This property states that an efficient adversary should not be able to find an input such that the evaluation algorithm outputs the incorrect value on this input. The reason why computational correctness holds is because it would be infeasible for the adversary to find an input such that the program outputs a valid signature on 0; if it did then it violates the unforgeability property of the underlying signature scheme.   
\par We need to show that given the copy-protected program, say $\rho$, an adversary cannot output two copies, say $\rho_1$ and $\rho_2$\footnote{Technically, this is incorrect since the two copies could be entangled and as written here, $\rho_1$ and $\rho_2$ are unentangled. But this is done just for ease of presentation, our argument can be suitably adapted to the general case.}, such that both of them evaluate $f_{a,b}(\cdot)$ with non-negligible probability. We prove this by contradiction. To show this, we first observe that we can get rid of the signature in the unclonable encryption ciphertext, by invoking the unclonable-indistinguishability property of the unclonable encryption scheme. This is where we crucially use the stronger indistinguishability property; this property allows us to change from one message to another message of our choice whereas in the (weaker) unclonability security property, the challenger is the one choosing the message to be encrypted. 
\par Now, we argue that there has to be a copy, say $\rho_1$ and evaluation algorithm $E_1$ (note that the adversary can choose the evaluation algorithms of its choice), such that when $E_1$ evaluates $\rho_1$ on the input $k$, where $k$ is the UE key, then we get a valid signature $\sigma$ on 0 with non-negligible probability.  Using $\rho_1$ we can then construct a forger that violates the unforgeability property of the digital signature scheme. 

\subsection{Structure of this Paper}
In \autoref{sec:prelims}, we give preliminary background and definitions.
In \autoref{sec:definitions}, we introduce natural definitions for many-time secure unclonable encryption in both private-key and public-key settings, as well as discuss the previous constructions given in \cite{BL19}. We give a construction for the private-key setting in \autoref{sec:privatekeyue} and for the public-key setting in \autoref{sec:publickeyue}. In \autoref{sec:bounds}, we present a generalized unclonable encryption construction using monogamy games, and a lower bound for conjugate encryption. \autoref{sec:copyprotection} shows that an unclonable encryption scheme satisfying unclonable-indistinguishable security (\autoref{def:unclonableindistinguishable}) implies copy-protection.

\newcommand{\mpk}{\mathsf{mpk}}
\newcommand{\Sim}{\mathsf{Sim}}
\newcommand{\queryset}{\mathsf{QSet}}
\newcommand{\values}{\mathcal{V}}
\newcommand{\density}[1]{\mathcal{D}\left( \mathcal{H}_{#1} \right)}
\section{Preliminaries} \label{sec:prelims}

\subsection{Notation}
We denote the security parameter by $\secparam$. We denote by $\negl(.)$ an arbitrary negligible function and by $\poly(.)$ an arbitrary function upper-bounded by a polynomial. We abbreviate probabilistic (resp., quantum) polynomial time by PPT (resp., QPT). 
\paragraph{} We denote by $\mathcal{M}$, $\mathcal{K}$, and $\mathcal{CT}$ (or  $\mathcal{H}_{\mathcal{CT}}$) the message space, the key space, and the ciphertext space, respectively. The message and the key are classical throughout this work, whereas the ciphertext can be classical or quantum, depending on the context. We sometimes use $0$ to denote a string of zeroes, the length of which will be clear from the context.

\subsection{Quantum Computing} Valid quantum states on a register $X$ are represented by the set of density operators on the Hilbert space $\mathcal{H}_X$, denoted by  $\density{X}$. A density operator $\rho: \mathcal{H}_X \to \mathcal{H}_X$ is defined a linear, positive semi-definite operator with unit trace, i.e. $\tr(\rho) = 1$, where $\tr$ is the trace operator. Density operators represent mixed quantum states, and a pure state $\ket{\psi} \in \mathcal{H}_X$ is represented by $\ket{\psi}\bra{\psi} \in \density{X}$.
\par Valid quantum operations from register $X$ to register $Y$ are represented by linear, completely positive trace-preserving (CPTP) maps $\phi: \density{X} \to \density{Y}$, also known as quantum channels. Valid quantum measurements on register $X$ with outcomes $x \in \mathcal{X}$ are represented by a positive operator-valued measure (POVM) on $\density{X}$, which is denoted by $F = (F_x)_{x\in X}$, where $F_x$ are positive semi-definite operators satisfying $\sum_x{F_x} = \id_X$, with $\id_X$ being the identity operator on $\mathcal{H}_\mathcal{X}$. The probability of measuring outcome $x$ on state $\rho$ equals $\tr(F_x \rho)$.
\par An EPR pair over $n$ qubits is a fully entangled bipartite $2n$-qubit state, defined as \begin{align*}
    \ket{\epr{n}} = \frac{1}{\sqrt{2^n}} \sum_{x \in \{0,1\}^n} \ket{xx},
\end{align*}
where $\left( \ket{x} \right)_{x \in \{0,1\}^n}$ is the standard basis.

\paragraph{Indistinguishability.} We define two distributions $\distr_0$ and $\distr_1$ to be computationally indistinguishable, denoted by $\distr_0 \approx_c \distr_1$, if any QPT distinguisher cannot distinguish the distributions $\distr_0$ and $\distr_1$.

\paragraph{Distance Measures.} There are two common distance measures considered in the literature: trace distance and fidelity. The fidelity of two quantum states $\rho,\sigma \in \density{A}$ is a measure of similarity between $\rho$ and $\sigma$ which is defined as 
$$F(\rho,\sigma) = \left(\tr\left(\sqrt{ \sqrt{\rho} \sigma \sqrt{\rho} }\right)\right)^2.$$
If $\sigma = \ket{\psi}\bra{\psi}$ is a pure state, the fidelity simplifies to $F(\sigma, \rho) = \braket{\psi}{\rho | \psi}$. We use the following useful fact: fidelity of two states does not increase under quantum operations. We state this fact from \cite{1996quant.ph..6012N} as a lemma below:
\begin{lemma}[Monotonicity of Fidelity] \label{lem:fidelity}
    Let $\rho, \sigma \in \density{X}$ and $\varphi: \density{X} \to \density{Y}$ be a CPTP map. Then, $$ F(\varphi(\rho), \varphi(\sigma)) \ge F(\rho, \sigma). $$
\end{lemma}

\noindent The trace distance of two states $\rho$ and $\sigma$, denoted by $T(\rho,\sigma)$ is defined as follows: 
$$T(\rho,\sigma) = \frac{1}{2}||\rho - \sigma ||_{tr} = \frac{1}{2} \tr\left( \sqrt{(\rho - \sigma)^{\dagger}(\rho - \sigma)}\right).$$

\paragraph{Almost As Good As New Lemma.} We use the Almost As Good As New Lemma\footnote{This is also known as the Gentle Measurement Lemma in the quantum information theory literature \cite{Winter_1999}.} \cite{aaronson2004limitations}, restated here verbatim from \cite{aaronson2016complexity}.

\begin{lemma}[Almost As Good As New]\label{lem:aagan} Let $\rho$ be a mixed state acting on $\mathbb{C}^d$. Let $U$ be a unitary and $(\Pi_0,\Pi_1=1-\Pi_0)$ be projectors all acting on $\mathbb{C}^d \otimes \mathbb{C}^d$. We interpret $(U,\Pi_0,\Pi_1)$ as a measurement performed by appending an ancillary system of dimension $d'$ in the state $\ket{0}\bra{0}$, applying $U$ and then performing the projective measurement $\{\Pi_0,\Pi_1\}$ on the larger system. Assuming that the outcome corresponding to $\Pi_0$ has probability $1-\varepsilon$, i.e., $\tr[\Pi_0(U\rho \otimes \ket{0}\bra{0}U^\dagger)]=1-\varepsilon$, we have $$\trD{\rho}{\widetilde{\rho}} \leq \frac{\sqrt{\varepsilon}}{2} ,$$
where $\widetilde{\rho}$ is state after performing the measurement and then undoing the unitary $U$ and tracing out the ancillary system: $$\widetilde{\rho} = \tr_{d'}\left(U^\dagger \left( \Pi_0U \left( \rho \otimes \ket{0}\bra{0} \right)U^\dagger \Pi_0 + \Pi_1U \left( \rho \otimes \ket{0}\bra{0} \right)U^\dagger \Pi_1\right) U \right) $$
\end{lemma}

\begin{corollary}\label{cor:uncomputing}
Let $\mathcal{Q}$ be a QPT algorithm which takes as input a state $\rho \in \density{A}$ and outputs a classical string $x\in X$. Then, $\mathcal{Q}$ can be reimplemented as $\mathcal{\widetilde{Q}}$ which satisfies the following properties: \begin{itemize}
    \item On input $\rho \in \density{A}$, $\mathcal{\widetilde{Q}}$ outputs $\rho' \otimes \ket{x'}\bra{x'} \in \density{A} \otimes \density{X}$ such that $$ \Pr \left[ x = x_0 : x \leftarrow \mathcal{Q}(\rho) \right] = \Pr \left[x' = x_0: \rho' \otimes \ket{x'} \bra{x'} \leftarrow \mathcal{\widetilde{Q}}(\rho) \right] $$ for any $x_0 \in X$. 
    %$x'$ has the same distribution as $x$ (the output of $\mathcal{Q}$ on $\rho$).
    \item For any state $\rho_0 \in \density{A}$ and a string $x_0 \in X$ satisfying $$ \Pr \left[ x = x_0 : \ket{x}\bra{x} \leftarrow \mathcal{Q}(\rho_0) \right] \ge 1 - \epsilon, $$ it holds that
    $$ \Pr \left[ x' = x_0 \land \trD{\rho'}{\rho_0} \le O(\sqrt{\epsilon}) : \rho' \otimes 
    \ket{x'} \bra{x'} \leftarrow \mathcal{\widetilde{Q}}(\rho_0) \right]. $$
\end{itemize}
In other words, $\mathcal{\widetilde{Q}}$ has the same functionality as $\mathcal{Q}$, and it also outputs a residual state $\rho'$ which is close to $\rho$ in trace distance provided that $\mathcal{Q}$ outputs the same string $x \in X$ probability close to 1 on input $\rho$.
\end{corollary}

\begin{proof}[Proof (sketch)] By the deferred measurement principle, we can transform $\mathcal{Q}$ into the following form without changing its functionality: \begin{itemize}
    \item It appends to $\rho$ an ancillary system initialized at $\ket{0}\bra{0}_B$.
    \item It applies a unitary $U$ to the bipartite state $\rho \otimes \ket{0}\bra{0}_B$ to obtain $\rho_{AB}$.
    \item It performs a POVM on $\rho_{AB}$ to measure $\ket{x}\bra{x} \in \density{X}$ and output $x$. Let the residual state be $\widetilde{\rho_{AB}}$ after this measurement.
\end{itemize}
$\mathcal{\widetilde{Q}}$ performs the steps above, and then recovers $\rho$ by applying $U^\dagger$ to $\widetilde{\rho_{AB}}$ and tracing out the ancillary system.
The analysis is essentially the same as that in \autoref{lem:aagan}, and we refer the reader to \cite{aaronson2016complexity} for details.
\end{proof}

\subsection{Post-Quantum Digital Signatures}
Post-quantum signature schemes with perfect correctness, defined below, can be constructed from post-quantum secure one-way functions:
\begin{definition}[Post-Quantum Signature Scheme] \label{def:signature}
A post-quantum signature scheme over a message space $\mathcal{M}$ is a tuple of PPT algorithms $(\gen, \sign, \ver)$: \begin{itemize}
    \item{\bf Key Generation:} $\gen(1^\secparam)$ takes as input a security parameter and outputs a pair of keys $(vk, sk)$.
    \item{\bf{Signing:}} $\sign(sk, m)$ takes as input the secret (signing) key $sk$ and a message $m \in \mathcal{M}$. It outputs a signature $\sigma$.
    \item {\bf Signature Verification:} $\ver(vk, m, \sigma')$ takes as input the verification key $vk$, a message $m \in \mathcal{M}$ and a candidate signature $\sigma'$. It outputs a bit $b\in \{0,1\}$.
\end{itemize}
which satisfy correctness and unforgeability properties defined below:
\begin{itemize}
    \item {\bf Correctness:} For all messages $m \in \mathcal{M}$, we have $$ \Pr[b = 1: (vk, sk) \leftarrow \gen(1^\secparam), \; \sigma \leftarrow \sign(sk, m), \; b \leftarrow \ver(vk, m, \sigma) ] = 1.$$
    \item {\bf Post-Quantum (One-Time) Existential Unforgeability:} For any QPT adversary $\alice$ and any message $m \in \mathcal{M}$, we have: $$ \Pr[1 \leftarrow \ver(vk, m, \sigma'): (vk, sk) \leftarrow \gen(1^\secparam), \; \ket{\sigma'}\bra{\sigma'} \leftarrow \alice(vk)] \le \negl(\secparam).$$
\end{itemize}
\end{definition}

\noindent Post-quantum digital signatures can be based on post-quantum one-way functions~\cite{Rompel90}. 

\subsection{Functional Encryption}
\label{sec:fe}
\noindent A functional encryption scheme allows a user to decrypt an encryption of a message $x$ using a functional key associated with $C$ to obtain the value $C(x)$. The security guarantee states that the user cannot learn anything beyond $C(x)$. Depending on the number of functional keys issued in the security experiment, we can consider different versions of functional encryption. Of interest to us is the notion of single-key functional encryption where the adversary can only query for a single functional key during the security experiment. 
\par A public-key functional encryption scheme $\fe$ associated with a class of boolean circuits $\cktclass$ is defined by the following algorithms.
\begin{itemize}

\item {\bf Setup,} $\setup(1^{\secparam},1^s)$: On input security parameter $\secparam$, maximum size of the circuits $s$ for which functional keys are issued, output the master secret key $\msk$ and the master public key $\mpk$.

\item {\bf Key Generation,} $\keygen(\msk,C)$: On input master secret key $\msk$ and a circuit $C \in \cktclass$ of size $s$, output the functional key $\sk_{C}$.

\item {\bf Encryption,} $\enc(\mpk,x)$: On input master public key $\mpk$, input $x$, output the ciphertext $\ct$.

\item {\bf Decryption,} $\dec(\sk_C,\ct)$: On input functional key $\sk_C$, ciphertext $\ct$, output the value $y$.

\end{itemize}

\begin{remark}
A private-key functional encryption scheme is defined similarly, except that $\setup(1^{\secparam},1^s)$ outputs only the master secret key $\msk$ and the encryption algorithm $\enc$ takes as input the master secret key $\msk$ and the message $x$.
\end{remark}

\noindent A functional encryption scheme satisfies the following properties.

\paragraph{Correctness.} Consider an input $x$ and a circuit $C \in \cktclass$ of size $s$. We require the following to hold for every $Q \geq 1$:
$$\prob \left[ C(x) \leftarrow \dec(\sk_C,\ct)\ :\ \substack{(\mpk,\msk) \leftarrow \setup(1^{\secparam},1^s);\\ \sk_C \leftarrow  \keygen(\msk,C); \\ \ct \leftarrow \enc(\mpk,x) } \right] \geq 1 - \negl(\secparam),$$
for some negligible function $\negl$.

\paragraph{Single-Key Security.} We only consider functional encryption schemes satisfying single-key security property. To define the security of a single-key functional encryption scheme $\fe$, we define two experiments $\expt_0$ and $\expt_1$. Experiment $\expt_0$, also referred to as {\em real} experiment, is parameterized by a PPT stateful adversary $\adversary$ and a challenger $\ch$. Experiment $\expt_1$, also referred to as the {\em simulated} experiment, is parameterized by a PPT adversary $\adversary$ and a PPT stateful simulator $\Sim$.  \\

\noindent \underline{$\expt_0^{\fe,\adversary,\ch}(1^{\secparam})$}:
\begin{itemize}

\item $\adversary$ outputs the maximum circuit size $s$.

\item $\ch$ executes $\fe.\setup(1^{\secparam},1^s)$ to obtain the master public key-master secret key pair $(\mpk,\msk)$. It sends $\mpk$ to $\adversary$. 

\item {\bf Challenge Message Query}: After receiving $\mpk$, $\adversary$ outputs the challenge message $x$. The challenger computes the challenge ciphertext $\ct \leftarrow \enc(\mpk,x)$. $\ch$ sends $\ct$ to $\adversary$. 

\item {\bf Circuit Query}: $\adversary$ upon receiving the ciphertext $\ct$ as input, outputs a circuit $C$ of size $s$. The challenger then sends $\sk_{C}$ to $\adversary$, where $\sk_{C} \leftarrow \keygen(\msk,C)$. 

\item Finally, $\adversary$ outputs the bit $b$.

\end{itemize}

\noindent \underline{$\expt_1^{\fe,\adversary,\Sim}(1^{\secparam})$}:
\begin{itemize}

\item $\adversary$ outputs the maximum circuit size $s$.

\item $\Sim$, on input $(1^{\secparam},1^s)$, outputs the master public key $\mpk$.

\item {\bf Challenge Message Query}: $\adversary$ upon receiving a public key $\mpk$, outputs a message $x$. $\Sim$, upon receiving $1^{|x|}$ (i.e., only the length of the input) as input, outputs the challenge ciphertext $\ct$.

\item {\bf Circuit Query}: $\adversary$ upon receiving the ciphertext $\ct$ as input, outputs a circuit $C$ of size $s$. $\Sim$ on input $(C,C(x))$, outputs a functional key $\sk_C$. 

\item Finally, $\adversary$ outputs a bit $b$.

\end{itemize}

\noindent A single-key public-key functional encryption scheme is secure if the output distributions of the above two experiments are computationally indistinguishable. More formally,

\begin{definition}
A single-key public-key functional encryption scheme $\fe$ is 	{\bf secure} if for every large enough security parameter $\secparam \in \mathbb{N}$, every PPT adversary $\adversary$, there exists a PPT simulator $\Sim$ such that the following holds:
$$\left| \prob \left[ 0 \leftarrow \expt_{0}^{\fe,\adversary,\ch}(1^{\secparam}) \right] - \prob \left[ 0 \leftarrow \expt_1^{\fe,\adversary,\Sim}(1^{\secparam}) \right] \right| \leq \negl(\secparam),$$
for some negligible function $\negl$.
\end{definition}

\paragraph{Instantiations.} A single-key public-key functional encryption scheme can be built from any public-key encryption scheme~\cite{SS10,GVW12}. If the underlying public-key encryption scheme is post-quantum secure then so is the resulting functional encryption scheme. 
\subsection{Quantum Copy-Protection}

\noindent Below we present the definition of a copy-protection scheme, adapted from \cite{broadbent2021secure} and originally due to \cite{Aar09}.
\begin{definition} [Copy-Protection Scheme] \label{def:copyprotection}
 Let $\fclass = \fclass(\secparam)=\{f:X \to Y\}$ be a class of efficiently computable functions functions. A copy protection scheme for $\fclass$ is a pair of quantum algorithms $(\cp, \eval)$ such that for some output space $\density{Z}$: \begin{itemize}
    \item \textbf{Copy Protected State Generation:} $\cp(1^\secparam, d_f)$ takes as input the security parameter $1^\secparam$ and a classical description $d_f$ of a function $f \in \fclass$ (that efficiently computes $f$). It outputs a mixed state $\rho_f \in \density{Z}$.
    \item \textbf{Evaluation:} $\eval(1^\secparam, \rho, x)$ takes as input the security parameter $1^\secparam$, a mixed state $\rho\in \density{Z}$, and an input value $x\in X$. It outputs a bipartite state $\rho' \otimes \ket{y}\bra{y} \in \density{Z} \otimes \density{Y}$. 
    
\end{itemize}
\end{definition}

% \noindent We define the correctness guarantee in the next section. Let us first focus on the security requirement. \\
%\prab{maybe first discuss the correctness before jumping into security}

\paragraph{Correctness:} Informally speaking, if an honestly generated copy-protected state $\rho_f$ for a function $f \in \fclass$ is honestly evaluated using $\eval$ on any input $x \in X$, the output should be $f(x)$. We defer the formal definition of correctness to \autoref{sec:pointfunctions}, where we define a weaker notion of computational correctness specifically for point functions, which is the context we use copy-protection in throughout \autoref{sec:copyprotection}.

\paragraph{Security.} Security in the context of copy-protection means that given a copy-protected program $\rho_f$ of a function $f \in \fclass$, no QPT adversary can produce two programs that can both be used to compute $f$. This is captured in the following definition adapted by the "malicious-malicious security" definition given in \cite{broadbent2021secure}:
\begin{definition}[Copy-Protection Security] \label{def:unclonability}
A copy-protection scheme $(\cp, \eval)$ for a class $\fclass$ of functions $f:X \to Y$ and a distribution $\distr$ over $\fclass$ is $\delta(\secparam)$-\textbf{secure} with respect to a family of distributions $\left\{\distr_X^f\right\}_{f \in \fclass}$ over $X$ if any QPT adversary $\abc$ cannot succeed in the following \textbf{pirating experiment} with probability greater than $\delta(\secparam) + \negl(\secparam)$: \begin{itemize}
    \item The challenger samples a function $f\leftarrow \distr$ and sends $\rho_f \leftarrow \cp(1^\secparam, d_f)$ to $\alice$.
    \item $\alice$ applies a CPTP map to split $\rho_f$ into a bipartite state $\rho_{BC}$, and sends the $B$ (resp., $C$) register to $\bob$ (resp., $\charlie$). No communication is allowed between $\bob$ and $\charlie$ after this step.
    \item The challenger samples $x \leftarrow \distr_X^f$ and sends $x$ to both $\bob$ and $\charlie$.
    \item $\bob$ (resp., $\charlie$) outputs\footnote{Since $\bob$ and $\charlie$ cannot communicate, the order in which they use their share of the copy-protected program is insignificant.} $y_B \in Y$ (resp., $y_C \in Y$). The adversary wins if $y_B = y_C = f(x)$.
\end{itemize}
Note that this definition is referred to as malicious-malicious security because the adversary is free to choose the registers $B,C$ as well as the evaluation algorithms used by $\bob$ and $\charlie$.
%any string $z$ such that $H_{\min}(s \given z) \ge \delta$, $$ \tr \left[ (\eval \otimes \eval)   \right]  \le \delta(\secparam) $$
\end{definition}

%\prab{in the BJL paper - are they considering the evaluation of the copy-protected states with respect to the same input?}

\newcommand{\process}{\mathsf{Process}}
\subsection{Copy-Protection of Point Functions} \label{sec:pointfunctions}
\paragraph{Point Functions:} Let $a$ and $b$ be binary strings. The point function $f_{a,b}: \{0,1\}^{|a|} \to \{0,1\}^{|b|}$ is defined as $$ f_{a,b}(x) = \begin{cases} b, \quad x = a  \\ 0, \quad x \ne a \end{cases}.$$

\noindent Ordinarily, one would define the correctness property of a copy-protection scheme as follows: an honest evaluation of $f(x)$ using an honestly generated copy-protected state $\rho_f$ for $f$ succeeds with small error for all $x$. For point functions, we define a weaker notion of correctness, which states that it is computationally hard to find an input $x$ which fails honest evaluation. In a bit more detail, an adversary is given a copy-protected program for the point function $f_{a,b}$. Firstly, if he uses this program to honestly evaluate $f_{a,b}$ on input $a$, then he will obtain output $b$ and the program will not be destroyed. Secondly, if he does not have auxiliary information and he only uses $\eval()$ to query $f_{a,b}$, then he will not come across an input that evaluates incorrectly except with small probability. We formalize this second condition as a correctness experiment.

%we run copy-protection algorithm twice on the same function to get two states. We hand over one state to the adversary who outputs a challenge input $x'$. Then, we check if the evaluation of the second state on $x'$ yields the same value as the function. We claim that the probability that it yields the same value is negligibly close to 1. The reason why we need two states is because the adversary might have destroyed the first state in the process of finding $x'$. 

\begin{definition}[Computational Correctness] \label{def:computationalcorrectness}
A copy-protection scheme $(\cp, \eval)$ for a class of point functions $\fclass=\{f_{a,b}\ :\ (a,b) \in X \times Y \}$, where $X = \{0,1\}^{\poly(\secparam)}$ and $Y = \{0,1\}^{\poly(\secparam)}$, satisfies \textit{computational $(\varepsilon(\secparam),\delta(\secparam))$-correctness} with respect to a probability distribution $\distr$ over $\fclass$ if: \begin{itemize}
    \item For any $f_{a,b} \in \mathcal{F}$, we have: $$\Pr[\rho' \otimes \ket{b}\bra{b} \leftarrow \eval(1^{\secparam},\rho,a) \wedge T(
    \rho,\rho') \leq \varepsilon(\secparam): \rho \leftarrow \cp(1^\secparam, (a,b))] = 1,$$
    where $T(\cdot,\cdot)$ denotes trace distance.
    \item No QPT adversary $\alice$ can succeed in the following correctness experiment with probability greater than $\delta(\secparam)$: 
    %$$ \Pr \left[ \rho'_2 \otimes \ket{b'}\bra{b'} \leftarrow \eval(1^\secparam, \rho_2,x') \wedge b' \neq 0 \land x' \ne a: \substack{f_{a,b} \leftarrow \distr\\ \rho_1 \leftarrow \cp(1^{\secparam},(a,b))\\ \rho_2 \leftarrow \cp(1^{\secparam},(a,b))\\
    %x' \leftarrow \alice(1^{\secparam}, \rho_1)} \right] \leq \delta(\secparam)$$
    \begin{itemize}
        \item The challenger samples $f_{a,b} \leftarrow \distr$ and computes $\rho_f^{(0)} \leftarrow \cp(1^\secparam, (a,b))$.
        \item For $i = 0,1,\dots,\poly(\secparam)$; $\alice$ sends an adaptive query $x_i \in X$ to the challenger, who computes $\rho_f^{(i+1)} \otimes \ket{y_i}\bra{y_i} \leftarrow \eval(1^\secparam, \rho_f^{(i)}, x_i)$ and sends $y_i \in Y$ back to $\alice$.
        \item $\alice$ wins if there exists an index $i \in \{0,1,\dots,\poly(\secparam)\}$ such that $x_i \ne a$ and $y_i \ne 0$.
    \end{itemize}
 %   \item For any $f_{a,b} \in \mathcal{F}$, we have: 
  %  $$\prob \left[ \forall x, (\rho' \otimes \ket{b'}\bra{b'})\leftarrow \eval(1^{\secparam},\rho,x) \ \wedge\ T(\rho,\rho') \leq \mu(\secparam)\ :\ \rho \leftarrow \cp(1^\secparam, (a,b)) \right] = 1$$
\end{itemize}
%We omit the terms $\varepsilon(\secparam)$ and $\delta(\secparam)$ if they are both negligible. 
\end{definition}

%\noindent  \prab{
%This is useful in scenarios where an adversarial party cannot come up with an input that forces the copy-protected state to output the wrong answer. Note that we only consider settings where the adversarial party still runs the evaluation algorithm on the state. If it runs a different algorithm on this state, in any case, there are no guarantees -- the state could potentially be destroyed. COMPARE WITH THE CLASSICAL COUNTERPART OF MARK'S PAPER}

\begin{remark} \label{rem:computationalcorrectness}
To give more context on this definition, imagine a scenario where a software firm (Alice) provides a copy-protected program $\rho_f$ to a client (Bob). Computational correctness guarantees that if the client follows the instructions provided by $\alice$, that is, if he only uses $\rho_f$ as an input to the algorithm $\eval()$, then he will get the correct output with overwhelming probability. This is true even if $\rho_f$ changes greatly after Bob evaluates the function $f$. However, $\rho_f$ has no reusability guarantee once Bob uses third party programs that modify $\rho_f$. Our definition is closely related to the notion of "computational functionality preservation" defined in \cite{BLMZ19} in the context of classical virtual-black-box obfuscation, which states given an obfuscated program, a PPT adversary cannot find an input which evaluates incorrectly. Note that the issue of the program being destroyed is specific to the quantum setting.

\end{remark}

\begin{remark}\label{rem:distributionalcorrectness}
Computational correctness is stronger than distributional correctness defined in \cite{broadbent2021secure}, which states that honest evaluation yields the correct output with probability close to 1, when the input is sampled from some distribution over the input space, as long as the distribution is efficiently samplable (in particular the uniform distribution). The reason is simple: a QPT adversary can sample the query input from this distribution.
\end{remark}

\begin{definition}[Copy-Protection Security for Point Functions] \label{def:copyprotectionpointfunction}
A copy-protection scheme $(\cp, \allowbreak \eval)$ for a class of point functions $\fclass=\{f_{a,b}\ :\ a \in X, b \in Y\}$ is called \textbf{secure} if it is $\frac{1}{2}$-secure with respect to $\left\{\distr_X^f\right\}_{f \in \fclass}$, where \begin{align*} &\Pr[x = a: x \leftarrow \distr_X^{f_{a,b}} ] = \frac{1}{2}, \\
&\Pr[x = a': x \leftarrow \distr_X^{f_{a,b}}] = \frac{1}{2|X| - 2}
\end{align*}
for all $f_{a,b} \in \fclass$ and $a' \ne a$. That is, $\distr_X^{f_{a,b}}$ samples $a$ with probability $1/2$ and every other $a' \ne a$ with equal probability.
\end{definition}

\par Note that an adversary can trivially succeed in the pirating experiment for point functions with probability $1/2$ by always outputting 0 in both registers.

%private/public-key ue
\section{Private-Key and Public-Key Unclonable Encryption: Definition}
\label{sec:definitions}
We present the definitions of public-key and private-key unclonable encryptions, satisfying reusable security. Before we present these definitions, we first define an unclonable encryption scheme borrowed from \cite{BL19}.  

\subsection{Unclonable Encryption} 
\label{sec:otue}

\begin{definition}[Quantum Encryption of Classical Messages (QECM)]
\label{def:qecm}
A QECM scheme $\ue$ is a tuple of QPT algorithms $(\ue.\setup,\ue.\enc,\ue.\dec)$: 
\begin{itemize} 

\item {\bf Setup, $\ue.\setup(1^{\secparam})$}: on input the security parameter $\secparam$, it outputs a key $k \in \mathcal{K}$. 

\item {\bf Encryption, $\ue.\enc(k,m)$}: on input a the key $k$ and a message $m \in \mathcal{M}$, it outputs a ciphertext $\ct \in \mathcal{H_{CT}}$. 

\item {\bf Decryption, $\ue.\dec(k,\ct)$}: on input a key $k \in \mathcal{K}$ and a ciphertext $\ct \in \mathcal{H_{CT}}$, it outputs a message $m' \in \mathcal{M}$. 
\end{itemize}
A public-key QECM is defined analogously.
\end{definition}

\paragraph{Statistical Correctness:} For any key $k \in \mathcal{K}$ and any message $m \in \mathcal{M}$ we have $$ \Pr \left[ m' = m: \; \ket{\ct}\bra{\ct} \leftarrow \ue.\enc(k,m), \; \ket{m'}\bra{m'} \leftarrow \ue.\dec(k, \ct) \right] \ge 1 - \negl(\secparam). $$

We consider two types of security notions: indistinguishability and unclonability. The former states that encryption hides the message in the absence of any knowledge of the key.

\paragraph{Indistinguishable Security:}

\begin{definition}[(One-Time) Indistinguishable Security] \label{def:indistinguishable}
We say that a QECM $\otue = (\otue.\setup, \allowbreak \otue.\enc, \otue.\dec)$ is indistinguishable-secure if for any messages $m_1, m_2 \in \mathcal{M}$ of equal length, the following holds: \begin{align*}
    \left\{\otue.\enc(k,m_1)\  : \ k \leftarrow \otue.\setup(1^\secparam) \right\} \approx_c \left\{\otue.\enc(k,m_2)\ : \ k \leftarrow \otue.\setup(1^\secparam)\right\}.
\end{align*}

\end{definition}

\noindent If we allow the encryption key to be reusable, we arrive at the notion of many-time indistinguishability, also known as semantic security.

\begin{definition}[Semantic Security] \label{def:semanticsecurity}
A QECM $(\setup,\enc,\dec)$ is said to satisfy semantic security if it satisfies the following property: for sufficiently large $\secparam \in \mathbb{N}$, for every $(m_1^{(0)},\ldots,m_q^{(0)}),(m_1^{(1)},\allowbreak \ldots,m_q^{(1)})$ such that $|m_i^{(0)}|=|m_i^{(1)}|$ for every $i \in [q]$ and $q=\mathrm{ poly}(\secparam)$,
$$\left\{\enc \left(k,m_1^{(0)} \right),\ldots,\enc \left( k,m_q^{(0)} \right)\right\} \approx_c \left\{ \enc \left(k,m_1^{(1)} \right),\ldots,\enc \left(k,m_q^{(1)} \right) \right\},$$
where $k \leftarrow \setup(1^{\secparam})$. \\ \\

\end{definition}

\noindent Semantic security for public-key QECM is defined analogously.

\begin{definition}[Semantic Security for Public-Key QECM] \label{def:pubkeysemanticsecurity}
A public-key QECM $(\setup,\enc,\dec)$ is said to satisfy semantic security if the following holds: for sufficiently large $\secparam \in \mathbb{N}$, for every $m_0,m_1$ of equal length, 
$$\{\enc(\pk,m_0)\} \approx_c \{\enc(\pk,m_1)\},$$
the distinguisher also receives as input $\pk$, where $\pk$ is such that $(\pk,\sk) \leftarrow \setup(1^{\secparam})$. 
\end{definition}

\paragraph{Unclonable Security:} 
Unclonable security states that a ciphertext cannot be cloned while preserving its decryption functionality. 

\begin{definition}[Unclonable Security]
\label{def:unclonable}
We say that a QECM with message length $n$ is \textbf{$t$-unclonable secure} if a QPT cloning adversary $\abc$ cannot succeed with probability more than $2^{-n+t} + \negl(\secparam)$ in the cloning experiment defined below:

\paragraph{Cloning Experiment:}
The cloning experiment consists of two phases:
\begin{itemize}
    \item In phase 1, the challenger samples a key $k \leftarrow \setup(1^\secparam)$ and a message $m \in \mathcal{M}$ uniformly at random. He then computes $\rho_{\ct}$ and sends it to $\alice$, who applies to $\rho_\ct$ a CPTP map $\phi: \density{A} \to \density{B} \otimes \density{C}$ to obtain the bipartite state $\rho_{BC}$. She sends the $B$ (resp., C) register of this state to $\bob$ (resp., $\charlie$). 
    \item In phase 2, $\bob$ and $\charlie$ are not allowed to communicate. The key $k$ is revealed to both of them. Then, $\bob$ (resp., $\charlie$) applies a POVM $B^k$ (resp., POVM $C^k$) to their register to measure and output a message $m_B$ (resp., $m_C$).
    \item The adversary wins iff $m_B = m_C = m$.
    
\end{itemize}

\end{definition}

\begin{remark} \label{rem:onetimeonly}
In this work, we only consider one-time unclonability, meaning the adversary is tasked to create two ciphertexts out of one. A natural extension of this notion would be to require that an adversary cannot create $m + 1$ ciphertexts out of $m$.
\end{remark}

\noindent Below is a stronger notion of security which implies both \autoref{def:indistinguishable} and \autoref{def:unclonable}, which is called unclonable-indistinguishable security\footnote{We slightly deviate from \cite{BL19} in defining unclonable-indistinguishable security. We have the adversary choose two messages whereas they require one of the messages to be a uniformly random message. We anticipate that the two definitions may be equivalent.}.

\begin{definition}[Unclonable-Indistinguishable Security]
\label{def:unclonableindistinguishable}
We say that a QECM with message length $n$ is unclonable-indistinguishable secure if a QPT cloning-distinguishing adversary $\abc$ cannot succeed with probability more than $1/2 + \negl(\secparam)$ in the cloning-distinguishing experiment defined below:

\paragraph{Cloning-Distinguishing Experiment:}
The cloning experiment consists of two phases:
\begin{itemize}
    \item In phase 1, $\alice$ chooses two messages $m_0,m_1 \in \mathcal{M}$ and sends them to the challenger. The challenger samples a key $k \leftarrow \setup(1^\secparam)$ and a bit $b$ uniformly at random. The challenger then computes $\rho \leftarrow \enc(k, m_b)$ and sends $\rho$ to $\alice$.
    \item In phase 2, $\alice$ has a ciphertext $\rho$ to which she applies a CPTP map $\phi: \density{A} \to \density{B} \otimes \density{C}$ to split it into two registers ($B,C$). She then sends the $B$ and $C$ registers to $\bob$ and $\charlie$, respectively.
    \item In phase 3, the key $k$ is revealed to both $\bob$ and $\charlie$. Then, $\bob$ (resp., $\charlie$) applies a POVM $B^k$ (resp., POVM $C^k$) to their register to measure and output a bit $b_B$ (resp., $b_C$).
    \item The adversary wins iff $b_B = b_C = b$.
    
\end{itemize}

\end{definition}

\paragraph{Instantiations.} The work of Broadbent and Lord~\cite{BL19} presented two constructions of one-time unclonable encryption, that is, constructions satisfying \autoref{def:indistinguishable} and \autoref{def:unclonable}. Their first construction, "conjugate encryption", which encrypts messages of constant length $n$, is information-theoretic and $n\log_2(1 + 1/\sqrt{2})$-unclonable secure. This scheme upper-bounds the success probability of a cloning adversary by $1/2 + 1/2\sqrt{2} \approx 0.85$ in the single-bit message ($n=1$) case. \\ \\
\noindent The second construction, "$\mathcal{F}-$conjugate encryption", is based on computational assumptions. It uses post-quantum pseudo-random functions but is only shown to be secure in the random oracle model. Nonetheless, it satisfies multi-message security and $\log_2(9)$-unclonable security for long messages. Their analysis for this scheme does not provide an unclonability bound for the single-bit message case. \\ \\
\noindent There is no known construction of an unclonable-indistinguishable secure scheme that we know of. For instance, no qubit-wise encryption scheme, including conjugate-encryption and generalized conjugate encryption (\autoref{appendix:bounds1}) can satisfy that definition for messages of length $n \ge 2$, since the cloning-distinguishing adversary can send one half of the ciphertext to $\bob$ and the other half to $\charlie$.

\paragraph{Conjugate Encryption Upper and Lower Bounds} \cite{BL19} shows that in their conjugate encryption scheme a cloning adversary can succeed with probability at most $(1/2 + 1/2\sqrt{2})^n$, which is based on BB84 monogamy-of-entanglement (MOE) game analyzed in \cite{Tomamichel_2013}. Their proof technique can be generalized to a class of MOE games, which we call \textit{real-orthogonal monogamy games}, to potentially obtain better security in the event that a monogamy game with a better value exists in this class. \\

\noindent Arbitrary pure single-qubit states on the $xz$ plane of the Bloch Sphere can be cloned with fidelity $f := (1/2 + 1/2\sqrt{2}) \approx 0.85$ \cite{Bru__2000}. Since every ciphertext lies on the $xz$ plane in conjugate encryption, a cloning adversary (for each qubit) clone the ciphertext with fidelity $f$. In phase 2, both $\bob$ and $\charlie$ will decrypt their register, hence each having fidelity $f$ to the message $\ket{m}\bra{m}$. By union bound, this implies that they both output $m$ with probability at least $(2f-1)^n \approx 0.7^n$. In the single-bit message case, this means that the scheme of~\cite{BL19}, and a class of similar constructions, can be violated by an adversary with probability 0.7. Conjugate encryption in particular can be attacked with probability $0.75$ for single-bit messages. For details regarding these upper-lower bounds, see \autoref{appendix:bounds1} and \autoref{appendix:bounds2}.

%Conjugate encryption in \cite{BL19} is $\secparam \log_2(1 + \frac{1}{\sqrt{2}})$-unclonable secure.

\subsection{Private-Key and Public-key Unclonable Encryption}
\noindent Having established the preliminaries, we are ready to present the definitions of one-time unclonable encryption as well as reusable unclonable encryption in the private-key and public-key settings.

\begin{definition}
A QECM $\otue = (\otue.\setup, \otue.\enc, \otue.\dec)$ is called a \textbf{one-time unclonable encryption} scheme if it satisfies one-time indistinguishable security (\autoref{def:indistinguishable}) as well as unclonable security (\autoref{def:unclonable}).
\end{definition}

\begin{definition}
\label{def:privkue}
A QECM $(\setup,\enc,\dec)$ is called a \textbf{private-key unclonable encryption scheme} if it satisfies the properties of (reusable) semantic security (\autoref{def:semanticsecurity}) and unclonable security. 
\end{definition}

\begin{definition}
\label{def:pubue}
A public-key QECM $(\setup,\enc,\dec)$, is called a \textbf{public-key unclonable encryption scheme} if it satisfies public-key semantic security (\autoref{def:pubkeysemanticsecurity}) and unclonable security.
\end{definition}
\noindent For a construction of public-key encryption using functional encryption, see \autoref{sec:publickeyue}.

\newcommand{\fakegen}{\mathsf{FakeGen}}

\section{Private-Key Unclonable Encryption (PK-UE)}
\label{sec:privatekeyue}
\noindent We present a construction of (reusable) private-key unclonable encryption in this section. One of the tools required in our construction is a private-key encryption with fake-key property. We first define and construct this primitive. 

\subsection{Private-Key Encryption with Fake-Key Property}
\label{sec:privencfk} 
\noindent We augment the traditional notion of private-key encryption with a property, termed as fake-key property. 
This property allows an authority to issue a fake decryption key $fk$, as a function of $m$ along with an encryption of $m$, denoted by $\ct$, in such a way that a QPT distinguisher will not be able to distinguish whether it received the real decryption key or a fake decryption key. A consequence of this definition is that, the decryption algorithm on input the fake decryption key $fk$ and $\ct$ should yield the message $m$. 

\begin{definition}[Fake-Key Property]
We say that a classical encryption scheme $(\setup,\enc,\dec)$ satisfies the \textbf{fake-key property} if there exists a polynomial time algorithm $\fakegen: \mathcal{CT} \times \mathcal{M} \to \mathcal{K}$ such that for any $m \in \mathcal{M}$, \begin{align*} 
    \left\{\left(ct^m \leftarrow \enc(k,m),k \right)\right\} \approx_c \left\{\left(ct^0 \leftarrow \enc(k,0), \ fk \leftarrow \fakegen(ct^0, m) \right)\right\}, \numberthis \label{fakekeyproperty}
\end{align*}
where $k \leftarrow \setup(1^{\secparam})$.
\end{definition}

\noindent Note that in particular, the fake-key property requires that $\dec(fk, ct^0) = m$.

\begin{theorem}
Assuming the existence of post-quantum pseudorandom functions, there exists a classical private-key encryption scheme (PKE) that satisfies the fake-key property.
\end{theorem}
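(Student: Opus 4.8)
The plan is to exhibit the concrete scheme already sketched in the technical overview and verify the two required properties: semantic security (reusable, i.e. against $q$ adaptively chosen message pairs) and the fake-key property. Let $\prffunc$ be a post-quantum pseudorandom function with key length $\secparam$, taking inputs in $\{0,1\}^\secparam$ and outputs in $\{0,1\}^n$. The scheme is: $\setup(1^\secparam)$ samples $k \la \{0,1\}^\secparam$ and $\otpad \la \{0,1\}^n$, and outputs the decryption key $(k,\otpad)$; $\enc((k,\otpad),m)$ samples $r \la \{0,1\}^\secparam$ and outputs $\ct = (r,\prffunc_k(r)\oplus m \oplus \otpad)$; $\dec((k,\otpad),(r,\theta))$ outputs $\theta \oplus \prffunc_k(r) \oplus \otpad$. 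The algorithm $FakeGen((r,\theta),m)$ samples a fresh key $k' \la \{0,1\}^\secparam$, sets $\otpad' := \theta \oplus \prffunc_{k'}(r) \oplus m$, and outputs $fk = (k',\otpad')$. Correctness is immediate, and by construction $\dec(fk,(r,\theta)) = \theta \oplus \prffunc_{k'}(r) \oplus \otpad' = m$, which gives the remark following the definition.

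First I would argue semantic security. Fix $q = \poly(\secparam)$ message pairs. Move by a hybrid: replace $\prffunc_k$ by a truly random function $g$; this is indistinguishable by post-quantum PRF security (the reduction just forwards the $q$ encryption queries as PRF evaluation queries on the sampled $r_i$'s, answering the distinguisher). Then, conditioned on all sampled randomness values $r_1,\dots,r_q$ being distinct — which fails with probability at most $\binom{q}{2}2^{-\secparam} = \negl(\secparam)$ — each ciphertext $(r_i, g(r_i)\oplus m_i^{(b)}\oplus \otpad)$ has a one-time-pad–masked second component that is uniformly random and independent of $b$, because $g(r_i)$ is fresh uniform randomness used nowhere else. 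Hence the two tuples are statistically close after the PRF switch, and computationally indistinguishable overall. (In fact the $\otpad$ term is not even needed for semantic security; it is there so that $FakeGen$ has a free coordinate to program, and it does not hurt security since it is a fixed one-time pad applied identically to both worlds.)

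Next I would argue the fake-key property, i.e. indistinguishability of $\{(ct^m,k,\otpad)\}$ from $\{(ct^0, k', \otpad')\}$ where $ct^0 = (r, \prffunc_k(r)\oplus \otpad)$, $k'$ fresh, and $\otpad' = \prffunc_k(r)\oplus \otpad \oplus \prffunc_{k'}(r) \oplus m$. The key observation is that the two tuples are almost identically distributed once you do the right change of variables. In the real world write the tuple as $(r, \prffunc_k(r)\oplus m \oplus \otpad, k, \otpad)$. Define $\otpad^\star := \prffunc_k(r) \oplus m \oplus \otpad \oplus \prffunc_k(r)$... more cleanly: in the real world, $\otpad$ is uniform and independent of $k,r$, so $\theta := \prffunc_k(r)\oplus m\oplus \otpad$ is uniform and independent of $k$, and $\otpad = \theta \oplus \prffunc_k(r) \oplus m$ is a deterministic function of $(\theta, k, r, m)$; thus the real tuple is distributed as: sample $k \la \{0,1\}^\secparam$, $r\la\{0,1\}^\secparam$, $\theta \la \{0,1\}^n$, output $(r,\theta,k,\theta\oplus \prffunc_k(r)\oplus m)$. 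In the fake world, $ct^0 = (r,\theta^\star)$ with $\theta^\star = \prffunc_k(r)\oplus\otpad$ uniform (since $\otpad$ uniform, independent of $k$); the returned key is $(k', \theta^\star \oplus \prffunc_{k'}(r)\oplus m)$ with $k'$ fresh uniform. Renaming $k'\to k$ and $\theta^\star \to \theta$, the fake tuple is distributed exactly as: sample $k\la\{0,1\}^\secparam$, $r\la\{0,1\}^\secparam$, $\theta\la\{0,1\}^n$, output $(r,\theta,k,\theta\oplus\prffunc_k(r)\oplus m)$ — identical to the real distribution. So the fake-key property in fact holds \emph{statistically} (indeed perfectly), and no computational assumption is needed for this step once semantic security of the base scheme is in place; the PRF is only used to make the overall scheme a secure reusable encryption.

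The main obstacle is bookkeeping rather than a genuine mathematical difficulty: one must be careful that in the fake-key game the ciphertext $ct^0$ and the fake key $fk$ are \emph{correlated} (they share $r$ and the PRF value $\prffunc_{k'}(r)$), so the "change of variables" argument must track exactly which quantities are independent uniform and which are deterministic functions of the others; getting the direction of the rewriting right (going from the real tuple's free variable $\otpad$ to the fake tuple's free variable $\theta$) is where a careless proof would slip. I would therefore write this step as an explicit equality of the two sampling procedures over all coins. The PRF reduction for the semantic-security step and the collision bound on the $r_i$'s are routine.
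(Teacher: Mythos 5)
Your proposal is correct and follows essentially the same route as the paper: the identical PRF-plus-one-time-pad scheme with the same $FakeGen$, and the same underlying observation that after a change of variables (the paper phrases it as a chain of reversible rewritings, you as an explicit equality of sampling procedures) the real and fake tuples are \emph{identically} distributed, so the fake-key property holds perfectly. Your treatment is if anything slightly more careful than the paper's, which relegates semantic security to ``a standard argument'' and contains a small bookkeeping slip in its chain of equivalences that your direct distributional identity avoids.
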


\begin{proof} Let $\{PRF_k: \{0,1\}^{\ell} \to \{0,1\}^n\ :\ k \in \{0,1\}^{\secparam}\}$ be a class of post-quantum pseudo-random functions, where $\ell$ is set to be $\secparam$ and $n$ is the length of the messages encrypted.
\par Consider the following scheme: \begin{itemize}
    \item \textbf{Setup, $\setup(1^{\secparam})$:} on input $\secparam$, it outputs $(k, otp)$, where $k \leftarrow \{0,1\}^\secparam$ and $otp \leftarrow \{0,1\}^n$ are uniformly sampled.
    
    \item \textbf{Encryption, $ \enc((k,otp),m)$:} on input key $(k,otp)$, message $m \in \{0,1\}^n$, it outputs $ct = (ct_1,ct_2)$, where $\ct_1 = r$ and $ct_2= PRF_k(r) \oplus m \oplus otp$ with $r \leftarrow \{0,1\}^\ell$ being uniformly sampled.
    
    \item \textbf{Decryption, $\dec((k,otp), ct)$:} on input $(k,otp)$, ciphertext $ct$ parsed as $(ct_1,ct_2)$, output $\mu$, where $\mu = ct_2 \oplus PRF_k(ct_1) \oplus otp$.
    \item \textbf{Fake Key Generation, $\fakegen(ct^0,m)$:} on input ciphertext $ct^0$ parsed as $(ct_1^0,ct_2^0)$, message $m$, it outputs the fake decryption key $fk= (k',otp')$, where $k'\leftarrow \{0,1\}^\secparam$ is uniformly sampled and $otp' = ct_2^0 \oplus PRF_{k'}(ct^0_1) \oplus m$.\\ {\em // Note: this choice of $otp'$ yields $\dec((k',otp'),ct^0) = m$.}
    
\end{itemize}

\paragraph{Correctness and Semantic Security:} Correctness can easily be checked. Semantic security follows from the security of pseudorandom functions using a standard argument.

\paragraph{Fake-Key Property.} Note that given $\{ct,(k,otp)\} \in \mathcal{C} \times \mathcal{K}$, one can perform the reversible operation: \begin{align*}
    \left\{(ct_1,ct_2), (k,otp)\right\} \longrightarrow \left\{(ct_1,ct_2 \oplus otp \oplus PRF_k(r)), (k,otp)\right\}.
\end{align*}

\noindent Thus, the fake-key property (\cref{fakekeyproperty}) can be rewritten as: \begin{align*}
    \{\left(ct^m \leftarrow \enc((k,otp),m),k \right)\} &\approx_c \{\left(ct^0 \leftarrow \enc((k,otp),0),fk \leftarrow \fakegen(ct^0, m) \right)\} \\
    \iff
    \{(r, PRF_k(r) \oplus m \oplus otp), (k,otp)\} &\approx_c \{(r, PRF_k(r) \oplus otp), (k',otp')\} \\
    \iff
    \{(r, m), (k,otp)\} &\approx_c \{(r, PRF_k(r) \oplus otp \oplus otp' \oplus PRF_{k'}(r)), (k',otp')\} \\
    \iff
    \{(r, m), (k,otp)\} &\approx_c \{(r, m), (k',otp')\} \\
    \iff
    \{(r, m), (k,otp)\} &\approx_c \{(r, m), (k', PRF_k(r) \oplus m \oplus otp)\} \\ 
    \iff
    \{(r, m), (k,otp)\} &\approx_c \{(r, m), (k, PRF_{k'}(r) \oplus m \oplus otp)\},  \numberthis \label{eight}
\end{align*}
where in the last step we swapped $k$ and $k'$, which is allowed since they are independently sampled. Therefore, observing in \cref{eight} that $k$ doesn't occur in the second part of the key, the fake-key property reduces to the following: \begin{align*}
    \{((r,m),otp)\} \approx_c \{((r,m),otp \oplus m \oplus PRF_{k'}(r))\},
\end{align*}
which follows\footnote{Note that this proof in fact demonstrates perfect (information-theoretic) fake-key property, even though we only need computational fake-key property in our construction.} from the fact that $otp$ is sampled independently from $r,m,$ and $k'$.

\end{proof}

\subsection{Construction}
We first describe the tools used in our construction of PK-UE scheme. 

\paragraph{Tools.} Let $\pke$ be a post-quantum private-key encryption scheme with fake-key property (defined in~\autoref{sec:privencfk}) and let $\ue$ be a one-time unclonable encryption scheme (defined in~\autoref{sec:otue}). \\

\noindent We present the construction of a PK-UE scheme below, which combines these tools such that it inherits semantic security from the first and unclonability from the second. \\

\paragraph{Setup, $\setup(1^{\secparam})$:} on input a security parameter $\secparam$, it outputs $k_{PKE}$, where 
$k_{PKE} \leftarrow \pke.\setup(1^\secparam)$.

\paragraph{Encryption, $\enc(k_{PKE}, m)$:} on input a key $k_{PKE}$, message $m$, it 
 first generates $k_{UE} \leftarrow \ue.\setup(1^\secparam)$ and outputs $ct = (ct_1, ct_2)$, where $ct_1 \leftarrow \pke.\enc(k_{PKE}, k_{UE})$ and $ct_2 \leftarrow \ue.\enc(k_{UE}, m)$.

\paragraph{Decryption, $\dec(k_{PKE}, ct)$:} on input the decryption key $k_{PKE}$, ciphertext $ct$, it computes $
     \mu = \ue.\dec(k_{UE}, ct_2)$, where $k_{UE} = \pke.\dec(k_{PKE}, ct_1)$. Output $\mu$.\\

\noindent Correctness follows from the correctness of the unclonable encryption scheme and the private-key encryption scheme. The semantic security follows from a standard hybrid argument and hence we omit the details; informally speaking, we first invoke the security of the underlying $\pke$ scheme to replace the message under $\pke$ to be 0 and then we invoke the indistinguishability security of $\ue$ to replace the message $m$. We perform this for all the $q$ messages, where $q=\mathrm {poly}(\secparam)$ is the number of messages chosen by the adversary in the semantic security experiment.

\subsubsection{Unclonable Security}

Suppose that for a parameter $t$, the proposed scheme is not $t$-unclonable secure; meaning there exists an adversary $A$ which breaks the corresponding cloning experiment (Hybrid 1) with probability $p = 2^{-n+t} + \frac{1}{\poly(\secparam)}$. We define another experiment Hybrid 2, which we claim the adversary breaks with probability $p - \negl(\secparam)$.

\paragraph{Hybrid 1:} The cloning experiment for $\pke$, the PK-UE scheme proposed above.

\paragraph{Hybrid 2:} \begin{itemize}
    \item In phase 1, the challenger samples $k_{PKE} \leftarrow \pke.\setup(1^{\secparam})$ and $k_{UE} \leftarrow \ue.\setup(1^{\secparam})$, then sends $(ct^0 \leftarrow \pke.\enc(k_{PKE}, 0), ct_2 \leftarrow \ue.\enc(k_{UE}, m))$ to the adversary $\alice$, who then applies a CPTP map $\phi: \density{A} \to \density{B} \otimes \density{C}$ to split it into two registers ($B,C$).
    \item In phase 2, the challenger reveals $fk \leftarrow \fakegen(ct^0, k_{UE})$ to both $\bob$ and $\charlie$, who then need to output $m_B = m_C = m$ in order to win the experiment. 
\end{itemize}

\begin{claim}
If $A$ wins in Hybrid 2 with probability $p'$, then $|p - p'| = \negl(\secparam)$.
\end{claim}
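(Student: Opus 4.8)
The plan is to reduce the indistinguishability of Hybrid~1 and Hybrid~2 directly to the fake-key property of the underlying $\pke$ scheme. The two hybrids differ only in how the $\pke$ component is handled: in Hybrid~1 the challenger sends $ct_1 \leftarrow \pke.\enc(k_{PKE}, k_{UE})$ in phase~1 and reveals the honest key $k_{PKE}$ in phase~2, whereas in Hybrid~2 the challenger sends $ct^0 \leftarrow \pke.\enc(k_{PKE}, 0)$ in phase~1 and reveals $fk \leftarrow FakeGen(ct^0, k_{UE})$ in phase~2. By the fake-key property (Equation~\ref{fakekeyproperty}) applied with the message set to $k_{UE}$, the pair $(ct_1 \leftarrow \pke.\enc(k_{PKE}, k_{UE}), k_{PKE})$ is computationally indistinguishable from the pair $(ct^0 \leftarrow \pke.\enc(k_{PKE}, 0), fk \leftarrow FakeGen(ct^0, k_{UE}))$; everything else in the cloning experiment (the sampling of $k_{UE}$, the encryption $ct_2 \leftarrow \ue.\enc(k_{UE}, m)$, Alice's CPTP map $\phi$, the revealed-key measurements of Bob and Charlie, and the final equality check $m_B = m_C = m$) is identical in the two hybrids.

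Concretely, I would build a QPT distinguisher $\distinguisher$ against the fake-key property. The reduction $\distinguisher$ first samples $m$ (as the cloning challenger does) and $k_{UE} \leftarrow \ue.\setup(1^{\secparam})$, submits $k_{UE}$ as its challenge message to the fake-key challenger, and receives a pair $(ct^*, k^*)$ which is either $(\pke.\enc(k_{PKE},k_{UE}), k_{PKE})$ or $(\pke.\enc(k_{PKE},0), FakeGen(\cdot, k_{UE}))$. It then computes $ct_2 \leftarrow \ue.\enc(k_{UE}, m)$, runs $\adversary$'s Alice on $(ct^*, ct_2)$ to obtain registers $(B,C)$, hands $k^*$ to Bob and Charlie, runs their POVMs to get $m_B, m_C$, and outputs $1$ iff $m_B = m_C = m$. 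By construction $\distinguisher$ perfectly simulates Hybrid~1 in the first case and Hybrid~2 in the second, so $\Pr[\distinguisher = 1]$ equals $p$ in the first case and $p'$ in the second; since these two cases are computationally indistinguishable and $\distinguisher$ is QPT, $|p - p'| = \negl(\secparam)$, which is exactly the claim.

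The only subtlety worth a sentence of care is that $\distinguisher$ must itself generate $k_{UE}$ before querying the fake-key challenger — this is fine because $k_{UE}$ is sampled independently of $k_{PKE}$ in both hybrids, so feeding it as the fake-key challenge message is legitimate. A second minor point is that the fake-key indistinguishability is stated for classical distinguishers receiving classical inputs, but the resulting distinguisher here is a QPT algorithm that uses its classical input $(ct^*, k^*)$ inside a quantum computation; this is unproblematic since "$\approx_c$" in the paper is defined against QPT distinguishers, and a QPT distinguisher can always ignore the quantum nature of its workspace and still distinguish classical inputs. I do not anticipate a real obstacle: the argument is a one-step reduction, and the main thing to get right is bookkeeping — making sure the simulation of the cloning experiment is syntactically exact in each of the two cases so that the distinguishing advantage transfers cleanly.
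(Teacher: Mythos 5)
Your proposal is correct and matches the paper's own argument: both prove the claim by a one-step reduction to the fake-key property with the message set to $k_{UE}$, where the reduction simulates the cloning experiment, feeds the challenge pair $(ct^*,k^*)$ in place of $(ct_1, k_{PKE})$, and outputs according to whether $A$ wins. Your write-up is somewhat more explicit than the paper's about the distinguisher's final output bit and the bookkeeping, but the approach is the same.
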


\begin{proof}
Assume to the contrary that $|p - p'| \ge \frac{1}{\poly(\secparam)}$. We will describe an adversary $\Tilde{\alice}$ which breaks the fake-key property of $\pke$.\\

\noindent Given $(ct^*, k^*_{PKE})$, $\Tilde{\alice}$ samples $k_{UE} \leftarrow \ue.\setup(1^{\secparam})$, computes $ct^m \leftarrow \ue.\enc(k_{UE}, m)$ and sends $(ct^*, ct^m)$ to $A$, who then applies a CPTP map $\phi: \density{A} \to \density{B} \otimes \density{C}$ to split it into two registers ($B,C$). In phase 2, $\Tilde{\alice}$ reveals $k^{*}_{PKE}$ to $\bob$ and $\charlie$. Observe that depending on whether the key $k_{PKE}^*$ is \textit{real} or \textit{fake}, we are either in Hybrid 1 or Hybrid 2. Hence, by assumption $\Tilde{\alice}$ can distinguish the two cases, breaking the fake-key property. 
\end{proof}

\noindent Now that we know $\alice$ breaks Hybrid 2 with probability at least $p - \negl(\secparam)$, we can construct an adversary $\vardbtilde{\alice}$ that breaks the unclonability experiment of $\ue$:

\begin{itemize}
    \item In Phase 1, the challenger samples $k_{UE} \leftarrow \ue.\setup(1^{\secparam})$ and  sends $ct^m \leftarrow \ue.\enc(k_{UE}, m)$ to $\vardbtilde{\alice}$. Then, $\vardbtilde{\alice}$ samples $k_{PKE} \leftarrow \pke.\setup(1^{\secparam})$ and computes $ct^0 \leftarrow \pke.\enc(k_{PKE}, 0)$. After that, $\vardbtilde{\alice}$ runs $A$ on input $(ct^0, ct^m)$ to obtain bipartite state $\rho_{BC} \in \density{B} \otimes \density{C}$, which she sends to $\vardbtilde{\bob}$ and $\vardbtilde{\charlie}$. In addition, $\vardbtilde{\alice}$ samples a randomness $r$ for the algorithm $\pke.\fakegen()$ and sends $r$ to both $\vardbtilde{\bob}$ and $\vardbtilde{\charlie}$.
    \item In phase 2, the challenger reveals $k_{UE}$ to both $\vardbtilde{\bob}$ and $\vardbtilde{\charlie}$. Then, $\vardbtilde{\bob}$ runs $\bob$ on his register\footnote{That is, the $B$ register of $\rho_{BC}$.}, revealing $fk$ as the key, to obtain and output $m_B$, where $fk \leftarrow \fakegen(ct^0,k_{UE})$ is sampled using randomness $r$. Similarly, $\vardbtilde{\charlie}$ obtains and outputs $m_C$ by running $\charlie$ on his register ($C$), revealing $fk$ as the key, where $fk$ is generated using randomness $r$ so that it matches what is generated by $\bob$.
    
\end{itemize}

Because the view of the adversary $\abc$ run as a subprotocol in this experiment matches exactly that in Hybrid 2, we conclude that $\vardbtilde{\alice}$ breaks the unclonability experiment of $\ue$ with probability $p'$, meaning $\ue$ is not $t-$unclonable secure.

Therefore, we just proved the following theorem.

\begin{theorem}
Assuming $\ue$ is a one-time unclonable encryption scheme with $t$-unclonable security, the encryption scheme constructed above is a private-key unclonable encryption scheme with $t$-unclonable security.
\end{theorem}

\begin{corollary}
The private-key unclonable encrpytion scheme proposed above is $n \log_2(1 + \frac{1}{\sqrt{2}})$-unclonable secure, where $n$ is the message length.
\end{corollary}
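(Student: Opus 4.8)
The plan is to instantiate the one-time uncloneable encryption scheme $\ue$ used in the construction with a concrete scheme achieving the claimed parameter, and then invoke the theorem above (``if $\ue$ is $t$-uncloneable secure, then the proposed scheme is also $t$-uncloneable secure''). Concretely, I would set $\ue$ to be the \emph{conjugate encryption} scheme of Broadbent and Lord~\cite{BL19}, whose uncloneable security is proven via the BB84 monogamy-of-entanglement game analyzed by Tomamichel et al.~\cite{Tomamichel_2013}. As recalled in Section~\ref{sec:otue}, for messages of length $n$ this scheme is a one-time pad and upper-bounds the cloning success probability of any (even unbounded) adversary by $(1/2 + 1/2\sqrt{2})^n = 2^{-n + n\log_2(1 + 1/\sqrt{2})}$; in the terminology of Definition~\ref{def:uncloneable}, it is $n\log_2(1 + 1/\sqrt{2})$-uncloneable secure.

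First I would check that conjugate encryption legitimately plays the role of $\ue$ in the construction of Section~\ref{sec:privatekeyue}. This amounts to two observations: (i) it is a tuple $(\setup,\enc,\dec)$ of QPT algorithms satisfying the (one-time) indistinguishability security of Definition~\ref{def:indistinguishable}, which holds since it is a one-time pad and which is precisely the property the semantic-security hybrid argument of the composed scheme relies on for the $\ue$ layer; and (ii) it satisfies $t$-uncloneable security with $t = n\log_2(1 + 1/\sqrt{2})$, which is the hypothesis of the theorem. Granting these, the theorem immediately yields that the composed scheme is $n\log_2(1 + 1/\sqrt{2})$-uncloneable secure. The remaining ingredient of the composed scheme, the private-key encryption with the fake-key property, exists assuming post-quantum one-way functions (in fact post-quantum PRFs), so no extra assumption is introduced.

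The main obstacle is minor and amounts to parameter bookkeeping. Conjugate encryption as stated ties the message length to the security parameter ($n = \secparam$), and in the construction $\ue.\setup$ is invoked on $1^{\secparam}$, so the composed scheme naturally encrypts messages of length $n = \secparam$; one should note this, or, for an arbitrary message length $n = \omega(\log\secparam)$, instantiate $\ue$ with its own internal security parameter set to $n$, which keeps every $\negl$ term negligible in $\secparam$. One should also verify that the additive slack, namely the $\negl(\secparam)$ accumulated in the two-hybrid reduction proving the theorem together with the error term in the monogamy-of-entanglement bound, combines into a single $\negl(\secparam)$ term, so that the overall bound has the form $2^{-n+t} + \negl(\secparam)$ demanded by Definition~\ref{def:uncloneable}; this is routine.
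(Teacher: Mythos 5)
Your proposal is correct and matches the paper's intended argument: the corollary is an immediate consequence of the preceding theorem once $\ue$ is instantiated with the conjugate encryption scheme of Broadbent and Lord, which (as recalled in Section~\ref{sec:otue}) is $n\log_2(1+1/\sqrt{2})$-uncloneable secure for messages of length $n=\secparam$. Your additional checks on indistinguishability and parameter bookkeeping are sensible but not beyond what the paper implicitly assumes.
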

\newcommand{\prf}{\mathsf{PRF}}
\section{Public-Key Unclonable Encryption}
\label{sec:publickeyue}
\noindent We now focus on constructing unclonable encryption in the public-key setting using functional encryption. We adopt the Trojan technique of~\cite{ABSV15}, proposed in a completely different context, to prove the  unclonability property.  
\par We describe all the tools that we use in the scheme below. 

\paragraph{Tools.}
\begin{itemize}
    \item A one-time unclonable encryption scheme, denoted by $\ue=(\setup,\enc,\dec)$. 
    \item A post-quantum secure symmetric-key encryption scheme with pseudorandom ciphertexts, denoted by $\ske = (\setup,\enc,\dec)$. That is, this scheme has the property that the ciphertexts are computationally indistinguishable from the uniform distribution. Such a scheme can be constructed from one-way functions\footnote{The scheme is quite simple and presented in~\cite{GolVol1}: suppose $\prf:\{0,1\}^{\secparam} \rightarrow \{0,1\}^{\ell}$ is a pseudorandom function. To encrypt a message $x \in \{0,1\}^{\ell}$ using a symmetric key $k$, compute $(r,\prf(k,r) \oplus x)$, where $r \xleftarrow{\$} \{0,1\}^{\secparam}$. From the security of pseudorandom functions, it follows that the ciphertext is computationally indistinguishable from the uniform distribution.}. 
    \item A post-quantum secure single-key public-key functional encryption scheme, denoted by $\fe=(\setup,\keygen,\enc,\dec)$. Such a scheme can be instantiated using~\cite{SS10,GVW12}. See \autoref{sec:fe}.
\end{itemize}

\subsection{Construction}
We denote the public-key unclonable encryption scheme that we construct as $\pubkue = \allowbreak (\pubkue.\setup,\allowbreak \pubkue.\enc,\allowbreak \pubkue.\dec)$. We describe the algorithms below. 

\paragraph{Setup, $\setup(1^{\secparam})$:} On input a security parameter $\secparam$, 
compute $(\fe.\msk,\fe.\mpk) \leftarrow \fe.\setup(1^\secparam)$. Compute $\fe.sk \leftarrow \fe.\keygen(\fe.\msk,\allowbreak F[ct])$, where $ct \xleftarrow{\$} \{0,1\}^{\poly(\secparam)}$ and $F[ct]$ is the following function: 
$$F[ct](b,K,m) = \left\{ \begin{array}{cc} Dec(K,ct) & \text{if }b=0,\\ m, & \text{otherwise} \end{array} \right. $$
Set the secret key to be $k=\fe.sk$ and the public key to be $pk=\fe.\mpk$. 

\paragraph{Encryption, $\enc(pk,m)$:} On input key $pk$, message $m$, it first generates $k_{UE} \leftarrow \ue.\setup(1^\secparam)$, and outputs $ct = (ct_1, ct_2)$, where $ct_1 \leftarrow \fe.\enc(\fe.\mpk,(1,\bot,k_{UE})$ and $ct_2 \leftarrow \ue.Enc(k_{UE}, m)$.

\paragraph{Decryption, $\dec(k,ct)$:} On input $k$, ciphertext $ct=(ct_1,ct_2)$, first compute $\fe.\dec(\fe.sk,ct_1)$ to obtain $k^*_{UE}$. Then, compute $\ue.Dec(k^*_{UE},ct_2)$ to obtain $m^*$. Output $m^*$. \\

\noindent The correctness follows from the correctness of the underlying UE and FE schemes. As in the private-key setting, the semantic security follows by a standard argument and hence, we omit the details. 

\subsubsection{Unclonable Security}
We show that our construction achieves the same unclonable security as the underlying one-time scheme $\ue$. Formally, we prove the following theorem.

\begin{theorem}
If $\ue$ is $t$-unclonable secure, then $\pubkue$ is also $t$-unclonable secure.
\end{theorem}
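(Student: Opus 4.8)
The plan is to mirror the private-key argument almost verbatim, replacing the fake-key invocation with a single invocation of the simulation security of the single-key functional encryption scheme. The chain of hybrids will be: Hybrid 1 is the cloning experiment for $\pubkue$ as constructed (so $\fe.sk$ is a real functional key for $F[ct]$ with a uniformly random hardwired $ct$, and $ct_1$ encrypts $(1,\bot,k_{UE})$); Hybrid 2 is the same experiment but with the $\fe$ challenger replaced by the simulator $\Sim$ guaranteed by single-key security, so that $ct_1 \leftarrow \Sim(1^{|(1,\bot,k_{UE})|})$ and the revealed key is $\fe.sk \leftarrow \Sim(F[ct], F[ct](1,\bot,k_{UE})) = \Sim(F[ct], k_{UE})$; and Hybrid 3 replaces the hardwired constant $ct$ inside $F[\cdot]$ with $ct \leftarrow \ske.\enc(k_{UE}, k_{UE})$ (or, more precisely, an $\ske$-ciphertext whose plaintext is $k_{UE}$) using the pseudorandom-ciphertext property of $\ske$. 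After Hybrid 3 the only thing that depends on $k_{UE}$ besides $ct_2 \leftarrow \ue.\enc(k_{UE},m)$ is the pair $(ct, \fe.sk)$ which is being revealed in phase 2 — and this is exactly an object that an $\ue$-cloning adversary $\vardbtilde\alice$ can simulate on its own once it learns $k_{UE}$ in phase 2, analogously to the role played by $FakeGen$ in the private-key proof.

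Concretely I would carry out the steps in this order. First, state and prove a claim that $A$'s winning probability in Hybrid 1 and Hybrid 2 differ by $\negl(\secparam)$: a distinguisher $\Tilde\alice$ against $\fe$ single-key security plays the whole cloning experiment internally — it outputs the circuit size, receives $\mpk$, outputs challenge message $(1,\bot,k_{UE})$ for freshly sampled $k_{UE}$, receives $ct_1$, runs $A$ on $(ct_1, \ue.\enc(k_{UE},m))$, then makes the circuit query $F[ct]$ for uniform $ct$, receives $\fe.sk$, reveals it to $\bob$ and $\charlie$ together with the appropriate $\ue$ key material, runs them, and outputs whether they both answered $m$; depending on whether it is in $\expt_0$ or $\expt_1$ this is Hybrid 1 or Hybrid 2, so a noticeable gap breaks $\fe$. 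Second, a claim that Hybrid 2 and Hybrid 3 differ by $\negl(\secparam)$ by the pseudorandom-ciphertext property of $\ske$: the only use of $ct$ is as a hardwired string inside the circuit $F[\cdot]$ handed to $\Sim$, so replacing a uniform string by an $\ske$ encryption of $k_{UE}$ under key $k_{UE}$ is indistinguishable (note $k_{UE}$ is never otherwise used as an $\ske$ key, so the $\ske$ adversary can sample it, get the challenge ciphertext, and run everything else). Third, construct $\vardbtilde\alice$ against $\ue$: on receiving $ct^m \leftarrow \ue.\enc(k_{UE},m)$, it samples the $\fe$/$\Sim$ randomness and a fresh randomness $r$ for everything that in Hybrid 3 depends on $k_{UE}$ only through the phase-2 reveal, runs $A$ to get $\rho_{BC}$, forwards the halves plus $r$ to $\vardbtilde\bob,\vardbtilde\charlie$; in phase 2, upon learning $k_{UE}$, each of them recomputes $ct \leftarrow \ske.\enc(k_{UE},k_{UE};r)$ and $\fe.sk \leftarrow \Sim(F[ct],k_{UE};r)$ (using the shared $r$ so the two parties produce the identical key), feeds $\fe.sk$ as the revealed key into $\bob$ resp. $\charlie$, and outputs their guesses. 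The view of $A,\bob,\charlie$ inside $\vardbtilde\alice$ is exactly Hybrid 3, so $\vardbtilde\alice$ wins the $\ue$-cloning game with probability $p - \negl(\secparam) > 2^{-n+t}+1/\poly(\secparam) - \negl(\secparam)$, contradicting $t$-uncloneable security of $\ue$; hence $\pubkue$ is $t$-uncloneable secure.

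The main obstacle, and the place where care is required, is the bookkeeping of shared randomness between $\vardbtilde\bob$ and $\vardbtilde\charlie$ in the last step: in Hybrid 3 the objects $ct$ and $\fe.sk$ are generated once by the challenger and then revealed to both parties, whereas in the $\ue$-reduction neither party sees $k_{UE}$ until phase 2, so they must each locally regenerate the same $ct$ and the same simulated functional key without communicating. This is handled exactly as in the private-key proof — $\vardbtilde\alice$ samples in phase 1 all the coins ($r$) needed for $\ske.\enc$ and for $\Sim$'s key-generation step and sends $r$ to both $\vardbtilde\bob$ and $\vardbtilde\charlie$ — but one must check that $\Sim$'s phase-2 behavior (the $\keygen$-simulation) depends only on $(F[ct], k_{UE})$ and its own fixed coins, and in particular does not secretly depend on state that $\Sim$ built while producing $ct_1$; since $\Sim$ is stateful, the clean fix is to have $\vardbtilde\alice$ itself run the full stateful $\Sim$ in phase 1 to produce $ct_1$, record $\Sim$'s resulting state, and hand that state (which is just classical randomness) as part of $r$ to $\vardbtilde\bob$ and $\vardbtilde\charlie$, who then deterministically complete $\Sim$'s second stage on input $(F[ct],k_{UE})$. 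A secondary minor subtlety is confirming $F[ct](1,\bot,k_{UE}) = k_{UE}$ under the construction's definition of $F$, which is immediate, and that decryption in Hybrid 3 still yields $m$ (so the "winning" event is well-defined), which follows since $\fe.\dec$ on the simulated key and ciphertext outputs $k_{UE}$ by construction of $\Sim$ and then $\ue.\dec(k_{UE},ct_2)=m$ by correctness of $\ue$.
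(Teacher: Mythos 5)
Your overall strategy is sound but genuinely different from the paper's. The paper uses the Trojan technique of~\cite{ABSV15}: it first replaces the uniformly random hardwired string $ct$ by $\ske.\enc(k_{SKE},k_{UE})$ for a \emph{fresh, independent} key $k_{SKE}$ (justified by the pseudorandom-ciphertext property, since $k_{SKE}$ appears nowhere else at that point), and only then switches the $\fe$ plaintext from $(1,\bot,k_{UE})$ to $(0,k_{SKE},\bot)$, using that $F[ct]$ takes the same value $k_{UE}$ on both plaintexts so that the single key query is admissible; after this, $ct_1$ no longer depends on $k_{UE}$ and the final reduction regenerates $ct$ and $\fe.sk=\fe.\keygen(\fe.\msk,F[ct])$ in phase~2 from shared randomness. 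You instead invoke the simulation-based single-key security directly, replacing $(\,\mpk,\,ct_1,\,\fe.sk\,)$ by $\Sim$'s outputs, so that the only residual dependence on $k_{UE}$ outside $ct_2$ is the explicit argument $F[ct](x)=k_{UE}$ handed to $\Sim$ in phase~2. This is a legitimate and arguably cleaner route given the definition in the appendix, and your handling of $\Sim$'s statefulness (run the first stages in phase~1, ship the resulting classical state to both parties as part of $r$) is exactly the right fix for the synchronization issue.

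There is, however, one step that does not work as written: your Hybrid~3 sets $ct\leftarrow\ske.\enc(k_{UE},k_{UE})$, i.e., it uses $k_{UE}$ as the $\ske$ key. The pseudorandom-ciphertext property only applies when the encryption key is hidden from the distinguisher, but your reduction must know $k_{UE}$ in order to compute $ct_2=\ue.\enc(k_{UE},m)$ and to feed $k_{UE}$ to $\Sim$'s key-generation stage, so the hybrid change cannot be reduced to that property (your own parenthetical ``the $\ske$ adversary can sample it, get the challenge ciphertext'' in fact describes encrypting $k_{UE}$ under a \emph{fresh hidden} key, which is the paper's hybrid, not yours). Fortunately this hybrid is superfluous in your simulation-based approach: since $\Sim$ receives $k_{UE}$ explicitly as the function value, the hardwired $ct$ never needs to decrypt to anything and can remain uniformly random throughout, in which case it can even be fixed in phase~1 and the $\ske$ scheme drops out of the argument entirely. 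Either delete Hybrid~3 or replace it by the paper's version with an independent $k_{SKE}$; with that repair the proof goes through.
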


\begin{proof}
Suppose that there exists an adversary $\abc$ which succeeds in the cloning experiment of $\pubkue$ with probability $p = 2^{-n+t} + \frac{1}{\poly(\secparam)}$. Through a sequence of hybrid experiments, we will construct an adversary which breaks the $t$-unclonability of $\ue$.

\paragraph{Hybrid 1:} This corresponds to the cloning experiment of $\pubkue$.

\paragraph{Hybrid 2:} Same as Hybrid 1, except $ct$ in $\pubkue.\setup()$, instead of being randomly sampled, is generated as $ct \leftarrow \ske.\enc(k_{SKE}, k_{UE})$, where $k_{SKE}\leftarrow \ske.\setup(1^{\secparam})$.

\begin{claim}
$\abc$ succeeds in Hybrid 2 with probability at least $p - \negl(\secparam)$.
\end{claim}
\begin{proof}
Hybrids 1 and 2 are computationally indistinguishable by the pseudorandom ciphertext property of $\ske$. Indeed, an adversary given a random text $r$ or a real ciphertext $ct$ can run the cloning experiment with $\abc$ to distinguish both the hybrids, hence distinguishing $r$ and $ct$.
\end{proof}

\paragraph{Hybrid 3:} Same as Hybrid 2, except $ct_1$ in $\pubkue.\enc()$, is generated as $ct_1 \leftarrow \fe.\enc(\fe.\mpk,\allowbreak  (0,k_{SKE}, \allowbreak \bot))$.

\begin{claim}
$\abc$ succeeds in Hybrid 3 with probability at least $p - \negl(\secparam)$.
\end{claim}
\begin{proof}
Hybrids 2 and 3 are indistinguishable by the (selective) security of $\fe$. Indeed, suppose that Hybrids 2 and 3 can be distinguished by $\abc$, and consider the following adversary $\alice'$ which breaks the (selective) security of $\fe$:

\begin{itemize}
    \item The challenger runs $(\fe.\mpk, \fe.\msk) \leftarrow \fe.\setup(1^\secparam)$. 
    \item $\alice'$ runs $k_{UE}\leftarrow \ue.\setup(1^\secparam)$ and $k_{SKE} \leftarrow \ske.\setup(1^\secparam)$, then sets $m_0 = (1,\bot,k_{UE})$ and $m_1 = (0,k_{SKE},\bot)$. Then, $\alice'$ sends $(m_0,m_1)$ to the challenger.
    \item The challenger chooses a random bit $b$ sends back $\fe.\mpk$ and $ct_1^b \leftarrow \fe.\enc(\fe.\mpk, m_b)$.
    \item $\alice'$ implements the function $\tilde{f} := F[\ske.\enc(k_{SKE}, k_{UE})]$ and makes a query to the challenger to receive $\fe.sk \leftarrow  \fe.\keygen(\fe.\msk, \tilde{f})$. This query is valid since $\tilde{f}(m_0) = \tilde{f}(m_1) = k_{UE}$. 
    
    \item Now $\alice'$ can perform a simulation, which matches Hybrid 2 with adversary $\abc$ when $b= 0$, and Hybrid 3 with adversary $\abc$ when $b = 1$. This will let $\alice'$ to distinguish the cases $b=0$ and $b=1$, breaking $\fe$ security. After sampling a random message $m \leftarrow \{0,1\}^n$, $\alice'$ has everything she needs to perform the simulation. Note that even though she doesn't know $\fe.\msk$, she has learned $\fe.sk$, which is the only time $\fe.\msk$ is used. 
    \end{itemize}

\end{proof}

Having established that $\abc$ succeeds in Hybrid 3 with probability $p - \negl(\secparam)$, we will now construct an adversary $\tldabc$ that succeeds in the cloning experiment of $\ue$ with probability $p - \negl(\secparam)$, contradicting the $t$-unclonable security:
\begin{itemize}
    \item The challenger samples $k_{UE} \leftarrow \ue.\setup(\secparam)$ and $m \leftarrow \{0,1\}^n$, then sends $ct_2 \leftarrow \ue.\enc(k_{UE},m)$ to $\vardbtilde{\alice}$.
    \item In Phase 1, $\vardbtilde{\alice}$ samples $(\fe.\msk, \fe.\mpk) \leftarrow \fe.\setup(1^\secparam)$ and $k_{SKE} \leftarrow \ske.\setup(1^\secparam)$. She then computes $ct_1 \leftarrow \fe.\enc(\fe.\mpk, (0, k_{SKE}, \bot))$. At the end of the phase $\vardbtilde{\alice}$ runs $\alice$ on input $ct^* = (ct_1, ct_2) $ to have $\vardbtilde{\bob}$ and $\vardbtilde{\charlie}$ receive bipartite state $\rho_{BC}\in \density{B} \otimes \density{C}$. $\vardbtilde{\alice}$ also samples a random string $r$ for $\ske.\enc$ and sends a copy of $r$ attached to the corresponding registers to both $\vardbtilde{\bob}$ and $\vardbtilde{\charlie}$.
    \item In Phase 2, the challenger reveals $k_{UE}$ to both $\vardbtilde{\bob}$ and $\vardbtilde{\charlie}$. $\vardbtilde{\bob}$ computes $ct \leftarrow \ske.\enc(k_{SKE}, k_{UE})$ (using randomness $r$), and $\fe.sk \leftarrow \fe.\keygen(\fe.\msk, F[ct])$. Then, he runs $\bob$ on the $B$ register of $\rho_{BC}$, revealing $\fe.sk)$ as the key, to obtain output $m_B$, which he outputs as is. Similarly, $\vardbtilde{\charlie}$ runs $\charlie$ to obtain and output $m_C$.
\end{itemize}
Described above, $\tldabc$ perfectly simulates the challenger of Hybrid 3 against $\abc$. Therefore, the success probability of $\tldabc$ is $p - \negl({\secparam})$. 

\end{proof}

\section{Additional Results on Unclonable Encryption} \label{sec:bounds}
\subsection{Generalized Conjugate Encryption}
\label{appendix:bounds1}
The conjugate encryption scheme of \cite{BL19} uses the BB84 monogamy-of-entanglement (MOE) game studied in
\cite{Tomamichel_2013}. The success probability of a cloning adversary exactly equals that of a MOE adversary restricted in state preparation. In this section we make the observation that their proof easily extends to a class of unclonable encryption schemes based on a class of MOE games, which we define below:

\begin{definition}[Real Orthogonal Basis] Let $(\ket{x}\bra{x})_{x \in X}$ be the standard basis for $\density{X}$, with $X = \{0,1,\dots,\dim \mathcal{H}_X - 1\}$. An orthonormal basis $\beta = (\ket{\psi_x}\bra{\psi_x})_{x \in X}$ for $\density{X}$ is called \textbf{real orthogonal} if there exist real coefficients $\left\{\alpha_{xx'}\right\}_{x,x' \in X}$ such that \begin{align*}
    \ket{\psi_x} = \sum_{x' \in X} \alpha_{xx'} \ket{x'}
\end{align*}
for all $x \in X$.
\end{definition} 

\noindent The following lemma, which is the main fact used to generalize conjugate encryption, states that an EPR pair defined in a real-orthogonal basis does not depend on the basis. It follows easily by properties of orthogonal matrices.
\begin{lemma}
  \label{lemma: real-orthogonal}
  If $\beta = (\ket{\psi_x}\bra{\psi_x})_{x \in X}$ is a real orthogonal basis, then
  \begin{align} \label{eq:basisindependence}
      \sum_{x \in X} \ket{\psi_x \psi_x} = \sum_{x \in X}{\ket{xx}}
  \end{align}
  and hence 
  \begin{align*}
      \sum_{x,x' \in X} \ket{x}\bra{x'} \otimes \ket{x}\bra{x'} = \sum_{x,x' \in X} \ket{\psi_x}\bra{\psi_{x'}} \otimes \ket{\psi_x}\bra{\psi_{x'}}
  \end{align*}
  by taking the outer product of each side by itself in \cref{eq:basisindependence}.
\end{lemma}

\begin{proof}
    By definition of a real orthogonal basis, the basis transition matrix $M = \left( \alpha_{xx'} \right)_{x,x' \in X}$ is an orthogonal matrix, and so is $M^T$. Thus, the columns of $M$ like its rows form an orthonormal basis, meaning \begin{align}
        \sum_{x \in X} \alpha_{xx'}\alpha_{xx''} = \delta_{x'x''}
        \end{align}
    for all $x',x'' \in X$. Hence,
    \begin{align*}
        \sum_{x \in X} \ket{\psi_x \psi_x} &= \sum_{x \in X} \left( \sum_{x' \in X} \alpha_{xx'} \ket{x'} \right) \left( \sum_{x'' \in X} \alpha_{xx''} \ket{x''} \right) \\
        &= \sum_{x,x',x'' \in X} \alpha_{xx'}\alpha_{xx''} \ket{x'x''} \\
        &= \sum_{x',x'' \in X} \delta_{x'x''} \ket{x'x''}\\
        &= \sum_{x \in X} \ket{xx}
    \end{align*}
\end{proof}
\begin{corollary} \label{cor:eprbasisindependence}
    If $X = \{0,1\}^n$ and $\beta = (\ket{\psi_x}\bra{\psi_x})_{x \in X}$ is a real orthogonal basis for $\density{X}$, then $$ \ket{\epr{n}}\bra{\epr{n}} = \sum_{x,x' \in X} \ket{\psi_x}\bra{\psi_{x'}} \otimes \ket{\psi_x}\bra{\psi_{x'}}. $$
\end{corollary}

\begin{definition}[Real-Orthogonal Monogamy Game\footnote{An example of a real-orthogonal game is studied in the context of coset states in \cite{CLLZ21hidden}. For a fixed subspace $A$ of $\mathbb{F}_2^n$ with dimension $n/2$, the coset states \[ \ket{A_{s,s'}} := \frac{1}{\sqrt{|A|}} \sum_{a \in A} (-1)^{\langle s', a \rangle} \ket{a + s} \] form a real-orthogonal basis for the $n$-qubit Hilbert space, where $s$ and $s'$ range over the cosets of $A$ and $A^\perp$, respectively. }] Let $X = \{0,1\}^n$.
A real-orthogonal monogamy game (ROMG) $\mathcal{G}$ of order $n$ is defined by the Hilbert space $\mathcal{H}_A$ of $n$-qubit states and a collection of real orthogonal bases $\left(\beta^\theta = \left(\ket{\psi_x^\theta}\bra{\psi_x^\theta}\right)_{x \in X}\right)_{\theta \in \Theta}$.
An adversary for $\mathcal{G}$ is defined by finite-dimensional Hilbert states $\mathcal{H}_B$ and $\mathcal{H}_C$, a tripartite state $\rho_{ABC} \in \density{A} \otimes \density{B} \otimes \density{C}$, along with two collections of POVMs: $\left(\left( B_x^\theta \right)_{x \in X}\right)_{\theta \in \Theta}$ and $\left(\left( C_x^\theta \right)_{x \in X}\right)_{\theta \in \Theta}$.
The value of $\mathcal{G}$, denoted by $p_G$, is the maximimum value the following expression can take for an optimal adversary: \begin{align*}
    p_{\text{win}} = \frac{1}{|\Theta|}\sum_{\theta \in \Theta} \tr \left( \Pi^\theta \rho_{ABC}\right),
\end{align*}
so that \begin{align*}
    p_G = \max_{\substack{\rho_{ABC} \in \density{A} \otimes \density{B} \otimes \density{C}\\ \left(\left( B_x^\theta \right)_{x \in X}\right)_{\theta \in \Theta}\\ \left(\left( C_x^\theta \right)_{x \in X}\right)_{\theta \in \Theta}}} p_{\text{win}},
\end{align*}
where \begin{align*}
    \Pi^\theta = \sum_{x \in X} \ket{\psi_x^\theta}\bra{\psi_x^\theta} \otimes B_x^\theta \otimes C_x^\theta.
\end{align*}
$p_{win}$ is the probability that $(\bob, \charlie)$ (the adversary) win in a monogamy game where: \begin{itemize}
    \item $\bob$ and $\charlie$, who are far away from each other, prepare a tripartite state $\rho_{ABC}$ and send the $A$ register to $\alice$. $\bob$ keeps the $B$ register and $\charlie$ keeps the $C$ register of this state.
    \item $\alice$ samples $\theta \in \Theta$ uniformly at random and measures her register in basis $\beta^\theta$ to obtain $x \in X$. She then sends $\theta$ to both $\bob$ and $\charlie$.
    \item $\bob$ and $\charlie$ guess the value $x$, and they win if they are both correct.
\end{itemize}
\end{definition}

\begin{theorem} \label{theorem:generalizedmoe}
Let $\mathcal{G}$ be a ROMG of order $n$ with value $p_G = 2^{-n+t} + \negl(\secparam
)$, then there exists an unclonable encryption scheme $\otue_G$ with (constant) message length $n$, which is $t$-unclonable secure.
\end{theorem}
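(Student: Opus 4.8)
The plan is to mimic the Broadbent--Lord construction of conjugate encryption, but instantiate it with the real-orthogonal bases $(\beta^\theta)_{\theta \in \Theta}$ of the given ROMG $\mathcal{G}$ instead of the BB84 bases. Concretely, I would define $\otue_G.\setup(1^\secparam)$ to sample a uniformly random basis index $\theta \leftarrow \Theta$ together with a uniformly random one-time pad $p \leftarrow X$ (equivalently a uniform $n$-bit string), and set the key to $k = (\theta, p)$. To encrypt a message $m \in \{0,1\}^n \cong X$, output the state $\enc(k,m) = \ket{\psi_{m \oplus p}^\theta}\bra{\psi_{m\oplus p}^\theta}$, i.e.\ encode $m\oplus p$ in the basis $\beta^\theta$. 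Decryption measures the received state in basis $\beta^\theta$ to obtain an outcome $x \in X$ and outputs $x \oplus p$. Correctness is immediate from the orthogonality of $\beta^\theta$, and the one-time-pad / indistinguishability property (Definition~\ref{def:indistinguishable}) holds because, for a fixed $\theta$, averaging over $p$ maps any message to the maximally mixed state $\frac{1}{2^n}\id_A$ (here I use that $\beta^\theta$ is a basis, so $\frac{1}{|X|}\sum_x \ket{\psi_x^\theta}\bra{\psi_x^\theta} = \frac{1}{2^n}\id_A$); hence the distribution of ciphertexts is independent of $m$, even information-theoretically.

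For uncloneable security I would give a reduction turning a cloning adversary $\abc$ against $\otue_G$ into an adversary for the monogamy game $\mathcal{G}$. The key point, exactly as in Broadbent--Lord, is that the cloning experiment \emph{is} (a slight reformulation of) the monogamy game in which the challenger's $A$-register is prepared in a fixed way rather than optimized by the adversary. Given $\abc$, build a ROMG adversary as follows: let the tripartite state be obtained by starting from the EPR-type state $\frac{1}{\sqrt{2^n}}\sum_{x\in X}\ket{xx}_{A A'}$, sending $A'$ through $\otue_G.\enc$ (which just relabels into basis $\beta^\theta$ on the relevant register, but since the map is the same for all $\theta$ by Lemma~\ref{lemma: real-orthogonal}, the joint state on $A$ and Alice's output registers does not depend on $\theta$), then running Alice's CPTP map $\phi$ to produce registers $B,C$; the POVMs $(B_x^\theta)$ and $(C_x^\theta)$ are Bob's and Charlie's measurements $B_k, C_k$ with $k=(\theta,p)$, but with the announced $\theta$ plugged in and the (random, independent) pad $p$ handled by shifting the measurement outcome. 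Because the pad $p$ is uniform and independent of everything Alice sees, and because of the basis-independence of the EPR pair guaranteed by Lemma~\ref{lemma: real-orthogonal}, the probability that $\abc$ wins the cloning experiment equals the probability that this adversary wins $\mathcal{G}$, which by hypothesis is at most $p_G = 2^{-n+t}+\negl(\secparam)$. Rearranging, $\otue_G$ is $t$-uncloneable secure.

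The step I expect to be the main obstacle is making the correspondence between the cloning experiment and the monogamy game fully rigorous, in particular handling the one-time pad correctly: in the cloning experiment Alice never learns $\theta$ or $p$, whereas the monogamy game only hides $\theta$ (there is no pad). I would resolve this by observing that encrypting $m$ under $(\theta,p)$ produces the state $\ket{\psi_{m\oplus p}^\theta}$, and since Alice's input is a single state with no auxiliary dependence on $m$ or $p$, one may as well take $m=0$ and think of the challenger's $A$-side outcome $x = p$ as the uniformly random label that Bob and Charlie must guess — which is precisely the monogamy-game formulation; this is also where Lemma~\ref{lemma: real-orthogonal} is essential, since it is what lets the challenger equivalently prepare the ciphertext via half of a fixed EPR pair and defer the choice of $\theta$, ensuring Alice's reduced state is genuinely basis-independent. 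The remaining bookkeeping — checking that Bob's and Charlie's success event $m_B = m_C = m$ translates into the monogamy-game winning predicate $\Pi^\theta$ after the pad shift — is routine.
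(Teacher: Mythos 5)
Your proposal is correct and follows essentially the same route as the paper: the same generalized conjugate-encryption construction (key $(\theta,p)$, ciphertext $\ket{\psi_{m\oplus p}^\theta}\bra{\psi_{m\oplus p}^\theta}$), the same averaging argument for indistinguishability, and the same reduction of the cloning experiment to the ROMG via an EPR pair whose basis-independence is supplied by Lemma~\ref{lemma: real-orthogonal}. In fact your treatment of the one-time pad and the outcome-shifting in the POVMs is spelled out in more detail than the paper's proof sketch, which simply defers to Lemma~14 and Corollary~2 of~\cite{BL19}.
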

\begin{proof}
We will construct a \textit{generalized conjugate encryption scheme} $\otue_G = (\setup, \enc, \dec)$ such that the success probability of a cloning adversary equals that of a ROMG adversary, which is bounded by $p_G$. The same construction and analysis is done by \cite{BL19} for the case of conjugate encoding \cite{wiesner83}, where $G$ is the BB84 game and $(\beta^\theta)$ are the Wiesner bases.\footnote{The BB84 game of order $n$ is defined as follows: $\beta^\theta = \left(\ket{\psi_x^\theta}\bra{\psi_x^\theta}\right)_{x \in X}$, where $$ \ket{\psi_x^\theta} = \bigotimes_{j=1}^n H^{\theta_j}\ket{x_j} $$ and $H$ denotes the single-qubit Hadamard gate.}\footnote{In \cite{BL19}, conjugate encryption is defined as having message length $n = \secparam$. We present $n$ to be a constant instead so that in the definition of $t$-unclonable security, the winning probability of a cloning adversary, which is negligible in $n$, is not negligible in $\secparam$.}
\paragraph{Setup:} On input security parameter $1^\secparam$, $\setup$ uniformly samples a key $(\theta,r) \leftarrow \Theta \times \{0,1\}^n $.
\paragraph{Encryption:} On input $m \in \mathcal{M}$ and $(\Theta,r) \in \mathcal{K}$, $\enc$ outputs the pure state $\rho = \ket{\psi_{(m \oplus r)}^\theta} \bra{\psi_{(m \oplus r)}^\theta}$.
\paragraph{Decryption:} On input ciphertext $\rho_{ct}$ and key $(\theta, r)$, $\dec$ measures $\rho_{ct}$ in the basis $\beta^\theta$ to obtain $x$, then outputs $x \oplus r$.

\paragraph{(One-Time) Indistinguishable Security:} It suffices to show that for any message, the view of an adversary with no knowledge of the key $(\theta, r)$ equals the completely mixed state, which can easily be done as \begin{align*}
    \frac{1}{2^\secparam|\Theta|}\sum_{\theta,r} \ket{\psi_{(m \oplus r)}^\theta} \bra{\psi_{(m \oplus r)}^\theta} = \mathbb{E}_\theta \frac{1}{2^\secparam} \sum_{x} \ket{\psi_x^\theta} \bra{\psi_x^\theta} = \mathbb{E}_\theta (\mathbf{id}/2^\secparam) = (\mathbf{id}/2^\secparam).
\end{align*}

\paragraph{$t$-unclonable security:} Let $\abc$ be a cloning adversary which uses the splitting CPTP map $\phi: \density{A} \to \density{B} \otimes \density{C}$ as well as POVMs $\left(B^{(\theta, r)}\right)_{\theta, r \in \{0,1\}^\secparam}$ and $\left(C^{(\theta, r)}\right)_{\theta, r \in \{0,1\}^\secparam}$. We will construct a ROMG adversary $\alice'$ for $\mathcal{G}$ that succeeds with the same probability. It uses the same Hilbert spaces $\mathcal{H}_B$, and $\mathcal{H}_C$, and it uses POVMs $(B')^\theta, (C')^\theta$, defined as $$(B')^{\theta}_x = \frac{1}{2^\secparam} \sum_{r \in \{0,1\}^\secparam} B_{x \oplus r}^{(\theta, r)}, \quad (C')_x^{\theta} = \frac{1}{2^\secparam} \sum_{r \in \{0,1\}^\secparam} C_{x \oplus r}^{(\theta, r)}.$$
Finally, the tripartite state $\rho_{ABC}$ is defined below using \autoref{cor:eprbasisindependence}: \begin{align*}
    \rho_{ABC} &= \left(\id_A \otimes \phi\right) \ket{\epr{\secparam}}\bra{\epr{\secparam}} \\ &= \left(\id_A \otimes \phi\right) \frac{1}{2^\secparam}\sum_{s,u\in \{0,1\}^\secparam} \ket{s}\bra{u} \otimes \ket{s}\bra{u} \\
    &= \left(\id_A \otimes \phi\right) \frac{1}{2^\secparam}\sum_{s,u\in \{0,1\}^\secparam} \ket{\psi_s^\theta}\bra{\psi_u^\theta} \otimes \ket{\psi_s^\theta}\bra{\psi_u^\theta} \\
    &= \frac{1}{2^\secparam}\sum_{s,u\in \{0,1\}^\secparam} \ket{\psi_s^\theta}\bra{\psi_u^\theta} \otimes \phi\left(\ket{\psi_s^\theta}\bra{\psi_u^\theta}\right).
\end{align*}
The success probability of $\alice'$ is then given by \begin{align}
    p_G &\ge \frac{1}{|\Theta|}\sum_{\theta \in \Theta} \sum_{x \in \{0,1\}^\secparam} \tr \left[ \left( \ket{\psi_x^\theta}\bra{\psi_x^\theta} \otimes (B_x')^\theta \otimes (C_x')^\theta \right) \rho_{ABC} \right] \nonumber \\
    &= \frac{1}{2^{2\secparam}|\Theta|}\sum_{\theta \in \Theta} \sum_{x,r,s,u \in \{0,1\}^\secparam} \tr \left[ \left( \ket{\psi_x^\theta}\bra{\psi_x^\theta} \otimes B_{x \oplus r}^{(\theta, r)} \otimes C_{x \oplus r}^{(\theta, r)} \right) \left( \ket{\psi_s^\theta}\bra{\psi_u^\theta} \otimes \phi\left(\ket{\psi_s^\theta}\bra{\psi_u^\theta}\right) \right) \right] \nonumber \\
    &= \frac{1}{2^{2\secparam}|\Theta|}\sum_{\theta \in \Theta} \sum_{x,r \in \{0,1\}^\secparam} \tr \left[ \left( B_{x \oplus r}^{(\theta, r)} \otimes C_{x \oplus r}^{(\theta, r)} \right) \phi\left(\ket{\psi_x^\theta}\bra{\psi_x^\theta}\right) \right] \label{eq:prev}\\
    &= \E_{m, \theta, r} \tr \left[ \left( B_{m}^{(\theta, r)} \otimes C_{m}^{(\theta, r)} \right) \phi\left(\ket{\psi_{m\oplus r}^\theta}\bra{\psi_{m\oplus r}^\theta}\right) \right] \label{eq:winningprob}
    \end{align}
After putting $x = m\oplus r$ in (\ref{eq:prev}), we see that (\ref{eq:winningprob}) above equals the winning probability of $\abc$ in the cloning experiment, and it is bounded by $p_G = 2^{-n+t} + \negl(\secparam)$ which suffices for the proof.
\end{proof} 

\noindent We are not aware of a MOE game with value provably less than $(1/2 + 1/2\sqrt{2})^n$, nor are we aware of a proof that it does not exist. Nevertheless, any advancement on this front will give insight to optimal unclonable-security by \autoref{theorem:generalizedmoe}.

\subsection{A Lower Bound for Conjugate Encryption.}
\label{appendix:bounds2}
 A natural question to explore is whether 0-unclonable security\footnote{\cite{BL19} show that 0-unclonable security implies unclonable-indistinguishable security, making this question more interesting.} is possible, even for single-bit messages, since 0-unclonable security means that a cloning adversary $\abc$ does not benefit from cloning the ciphertext at all, and hence cannot do better than the trivial strategy of giving the ciphertext to $\bob$ and having $\charlie$ randomly guess the message. In this section we show that the conjugate encryption of \cite{BL19} is not $0-$unclonable secure. To show this, we note that the valid ciphertexts in conjugate encryption for one-bit messages all lie on the $xz$-plane of the Bloch Sphere, i.e. they do not have an imaginary phase in the computational basis. Besides, encrypting multi-bit messages is done simply by encrypting each bit separately. The following lemma, which refers to the optimal equatorial cloner studied in \cite{Bru__2000}, will take advantage of this fact:
 \begin{lemma}
   \label{lemma:cloning} Let $\mathcal{D} =  \density{2}$ denote the space of one-qubit states. Then, there exists a cloning map $\Phi: \mathcal{D} \to \mathcal{D} \otimes \mathcal{D}$ such that $F(\rho, \tr_C(\Phi(\rho))) \ge 1/2 + 1/2\sqrt{2}$ and $F(\rho, \tr_B(\Phi(\rho))) \ge 1/2 + 1/2\sqrt{2}$ for any $\rho$ which is a valid ciphertext of a single-bit message in generalized conjugate encryption, where $\tr_X$ is the partial trace operation of tracing out the $X$ register.
 \end{lemma}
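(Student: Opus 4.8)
The plan is to obtain $\Phi$ by transporting the optimal $1\to 2$ phase-covariant (``equatorial'') cloner of~\cite{Bru__2000} into the frame in which all single-qubit conjugate-encryption ciphertexts live. For one-bit messages the valid ciphertexts are exactly the four Wiesner states $\ket{0}\bra{0},\ \ket{1}\bra{1},\ \ket{+}\bra{+},\ \ket{-}\bra{-}$, whose Bloch vectors are $\pm\hat z,\ \pm\hat x$; in particular each is a real density matrix lying on the $xz$ great circle of the Bloch sphere. So I would first fix, once and for all, the single-qubit unitary $V=\tfrac{1}{\sqrt2}(\id - i\sigma_x)=e^{-i\pi\sigma_x/4}$, the $\pi/2$ rotation about the $x$-axis, which acts on Bloch vectors by $\hat x\mapsto\hat x$, $\hat z\mapsto-\hat y$; hence $V$ carries the $xz$ great circle onto the equatorial ($xy$) great circle, and in particular maps every valid ciphertext $\rho$ to an equatorial pure state $V\rho V^{\dagger}$.

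The next step is to recall the economical phase-covariant cloner: let $U_{pc}$ be the two-qubit unitary (input register first, fresh register second) fixed by $U_{pc}\ket{00}=\ket{00}$ and $U_{pc}\ket{10}=\tfrac{1}{\sqrt2}(\ket{01}+\ket{10})$, and set $\Phi_{pc}(\sigma):=U_{pc}\,(\sigma\otimes\ket{0}\bra{0})\,U_{pc}^{\dagger}$, a CPTP map from one qubit to two. I would then verify the intertwining relation $U_{pc}(Z_\phi\otimes\id)\ket{j}\ket{0}=(Z_\phi\otimes Z_\phi)U_{pc}\ket{j}\ket{0}$ for $j\in\{0,1\}$ and $Z_\phi=\mathrm{diag}(1,e^{i\phi})$, which makes $\Phi_{pc}$ covariant under $\hat z$-rotations and therefore makes both clone fidelities constant along the equator. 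It thus suffices to evaluate them at $\ket{+}$: expanding $U_{pc}\ket{+}\ket{0}$ and tracing out either output qubit gives a reduced state with Bloch vector $\big(\tfrac{1}{\sqrt2},0,\tfrac12\big)$, so $F\big(\ket{+}\bra{+},\ \tr_B(\Phi_{pc}(\ket{+}\bra{+}))\big)=\tfrac12\big(1+(1,0,0)\cdot(\tfrac{1}{\sqrt2},0,\tfrac12)\big)=\tfrac12+\tfrac{1}{2\sqrt2}$, with the same value for $\tr_C$ because $U_{pc}\ket{+}\ket{0}$ is symmetric under swapping the two output qubits. By the $\hat z$-covariance, $\tfrac12+\tfrac{1}{2\sqrt2}$ is then attained for every equatorial pure state.

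Finally I would set $\Phi(\rho):=(V^{\dagger}\otimes V^{\dagger})\,\Phi_{pc}(V\rho V^{\dagger})\,(V\otimes V)$, which is again a CPTP map $\mathcal{D}\to\mathcal{D}\otimes\mathcal{D}$. Since $\tr_B(\Phi(\rho))=V^{\dagger}\,\tr_B(\Phi_{pc}(V\rho V^{\dagger}))\,V$ and fidelity is unitarily invariant, $F(\rho,\tr_B(\Phi(\rho)))=F(V\rho V^{\dagger},\ \tr_B(\Phi_{pc}(V\rho V^{\dagger})))$; for a valid ciphertext $\rho$ the state $V\rho V^{\dagger}$ is equatorial, so this equals $\tfrac12+\tfrac{1}{2\sqrt2}$, and identically for $\tr_C$, which is what the lemma asserts.

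The main work — and the only place anything could go subtly wrong — is the middle step: getting the $\hat z$-intertwining relation right so that $\Phi_{pc}$ really is equator-covariant (this is what lets the single computation at $\ket{+}$ propagate to all equatorial states, and in particular to all four Wiesner states), and computing the Bloch vector of the clone correctly. There is no deep obstacle beyond that: the statement is just the optimal equatorial cloner of~\cite{Bru__2000} expressed in the frame relevant to conjugate encryption, and once the covariance and the one fidelity value are in hand, the conjugation-by-$V$ argument and the bookkeeping that $V$ maps the $xz$ circle onto the equator are routine.
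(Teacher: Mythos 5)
Your proposal is correct, and it actually does more than the paper: the paper states Lemma~\ref{lemma:cloning} without proof, simply attributing it to the optimal equatorial (phase-covariant) cloner of~\cite{Bru__2000} after observing that all single-bit conjugate-encryption ciphertexts lie on the $xz$ great circle. Your write-up supplies a self-contained derivation of that black box: the economical cloner $U_{pc}\ket{00}=\ket{00}$, $U_{pc}\ket{10}=\tfrac{1}{\sqrt2}(\ket{01}+\ket{10})$ is indeed covariant under $Z_\phi$ on the relevant input subspace (it suffices to check it on $\ket{00}$ and $\ket{10}$ since the ancilla is always $\ket{0}$), the reduced clone of $\ket{+}$ has Bloch vector $(1/\sqrt2,0,1/2)$ giving fidelity $\tfrac12+\tfrac{1}{2\sqrt2}$ symmetrically in both outputs, and conjugating by $V=\tfrac{1}{\sqrt2}(\id-i\sigma_x)$ (which sends $\sigma_x\mapsto\sigma_x$, $\sigma_z\mapsto-\sigma_y$, hence the $xz$ circle onto the equator) transports this to the four Wiesner states; unitary invariance of fidelity and $\tr_B((V^\dagger\otimes V^\dagger)\sigma(V\otimes V))=V^\dagger\tr_B(\sigma)V$ close the argument. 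The trade-off is only one of economy versus completeness: the paper's citation keeps the appendix short and also implicitly covers arbitrary $xz$-plane states (as claimed in Section~\ref{sec:otue}), while your explicit construction verifies the constant and would let a reader check the downstream $(2f-1)^n$ lower bound without consulting~\cite{Bru__2000}.
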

 The following result, then is imminent:
 \begin{theorem} \label{thm:lowerbound}
 Let $G$ be a real-orthogonal monogamy game of order $n$ which is an $n$-fold parallel repetition of a real-orthogonal monogamy game of order 1, i.e. the basis states are of the form \[ \ketbra{\psi^\theta_x}{\psi^\theta_x} = \bigotimes_{i=1}^n \ketbra{\psi^{\theta_i}_{x_i}}{\psi^{\theta_i}_{x_i}}, \]
     where $\Theta = (\widetilde{\Theta})^n$, $\theta = (\theta_1, \dots, \theta_n)$, and $\left( \ketbra{\psi^{\theta_i}_{b}}{\psi^{\theta_i}_b}  \right)_{b \in \{0,1\}}$ is a real-orthogonal basis of the one-qubit Hilbert space for any $\theta_i \in \widetilde{\Theta}$. \\ \\
\noindent Then, the generalized conjugate encryption $\otue_G$ as defined in the proof of \autoref{theorem:generalizedmoe} is \textbf{not} $(cn)$-unclonable secure for any constant $c < 1/2$.
 \end{theorem}
  
 \begin{proof}
 It suffices to construct a cloning adversary adversary $\abc$ which succeeds with probability $2^{-n/2}$. At a high level, we do the following: since every qubit is encrypted individually, $\bob$ and $\charlie$ will independently guess each qubit of the message. 
 \par By \autoref{lemma:cloning}, there exists a cloner $\Phi: \density{2} \to \density{2} \otimes \density{2}$ which clones every qubit of a valid ciphertext $\rho_{\ct}$ with fidelity $f = 1/2 + 1/2\sqrt{2}$. Given a ciphertext $\rho_{\ct}$ in phase 1, $\alice$ will use the map $\Phi^{\otimes n}$ to split it into two registers of $n$-qubits, so that if $\rho_B = \tr_C(\Phi^{\otimes n}\left(\rho_\ct \right)) = \bigotimes_{i=1}^n\rho_{B,i}$ is the local view of $\bob$, then $F(\rho_{B,i},\rho_{\ct,i}) \ge f$ (similarly for $\charlie$).\\ \\
 \noindent In phase 2, after the key $k$ is revealed, $\bob$ and $\charlie$ each apply $\dec(k, .)$ to their register, which can be applied qubit-wise. Since fidelity cannot decrease with quantum operations \autoref{lem:fidelity}, the local view of $\rho_{B,i}'$ of $\bob$ after decrypting has fidelity at least $f$ to $\ket{m_i}\bra{m_i} = \dec(k, \rho_{ct})$, meaning $\braket{m_i}{\rho_{B,i}' | m_i} \ge f$ (similarly for $\charlie$). \\ \\
 \noindent Next, $\bob$ and $\charlie$ measure their register in the standard basis. By definition of fidelity, then $\Pr[m_{B,i} = m_i] \ge f$ and $\Pr[m_{C,i} = m_i] \ge f$. By union bound, this implies $\Pr[m_{B,i} = m_{C,i} = m_i] \ge 2f - 1 = 2^{-1/2}$. Since every bit of the message $m$ is independent, it follows that $\Pr[m_B = m_C = m] \ge (2f-1)^n = 2^{-n/2}$ as desired.
 \end{proof}
 
 The bound in \autoref{thm:lowerbound}, which states that a cloning adversary cannot succeed with probability greater than $0.71^n$, applies to conjugate encryption of \cite{BL19}. Yet, there is an even simpler cloning attack which targets this scheme specifically and succeeds with probability $0.75^n$.
 
 \begin{theorem} \label{thm:lowerbound2}
 Conjugate encryption is not $(cn)$-unclonable secure for any $c < 1 - \log_2(3/4)$.
 \end{theorem}
 \begin{proof}
 Consider the following cloning adversary $\abc$: given a ciphertext $\rho = \enc((\theta,r),m) = H^\theta\ket{m}\bra{m}H^\theta$, $\alice$ samples $\theta' \uniform \{0,1\}^n$ and measures $\rho$ in the Wiesner basis $H^{\theta'}$ to obtain a classical string $x \in \{0,1\}^n$, which she sends to both $\bob$ and $\charlie$. After the key $(\theta, r)$ is revealed, $\bob$ and $\charlie$ each output $m' := x \oplus r$. For each $i$, we have $m_i' = m_i$ with probability 1 conditioned on $\theta_i' = \theta_i$, and $m_i' = m_i$ with probability $1/2$ conditioned on $\theta_i' \ne \theta_i$. Therefore, the success probability of this adversary is given by \[ \prob \left[ m' = m \right] =  \prod_{i=1}^n \prob \left[ m_i' = m_i\right] = \prod_{i=1}^n \left(\frac{1}{2} \prob \left[ m_i' = m_i \given \theta_i' = \theta_i\right] + \frac{1}{2} \prob \left[ m_i' = m_i \given \theta_i' \ne \theta_i\right]\right) =  \left(\frac{3}{4}\right)^n \]
 
 \end{proof}

\section{Construction of Copy-Protection from Unclonable Encryption} \label{sec:copyprotection}
In this section, we present an application of unclonable encryption by showing that the existence of an unclonable-indistinguishable secure scheme (see \autoref{def:unclonableindistinguishable}) implies a copy-protection scheme over a special class of point functions. unclonable-indistinguishable security seems to be a stronger notion than unclonable security, and it remains open question whether it is possible. \\
\noindent The main drawback of our construction is that the copy-protected program $\rho_f$ for the point function $f_{a,b}$ is reusable only if it is used to evaluate the function on the "correct" input $a$. When $f$ is evaluated on inpnuts $x \ne a$, our scheme does not guarantee that $\rho_f$ will not be destroyed.

\paragraph{Construction:} Let $(\gen, \sign, \ver)$ be a post-quantum signature scheme, and let $\ue$ be an unclonable-indistinguishable secure unclonable-encryption scheme encrypting $n$-bit messages, where $n$ is the size of a signature created by $\sign()$. We construct a copy-protection scheme ($\cp, \eval$) for the family $\mathcal{F} = \{f_{k,(vk||\sigma)}: k \leftarrow \ue.\setup(1^\secparam), (vk, sk) \leftarrow \gen(1^\secparam), \sigma \leftarrow \sign(sk, 0)\}$ of point functions. Let $X,Y$ denote the domain and codomain of $f \in \fclass$.

\begin{itemize}
    %\item{\bf Setup:} On input $\secparam$, $\setup$ outputs $k \leftarrow \ue.\setup(1^\secparam)$.
    \item{\bf Copy-Protected State Generation:} On input the security parameter $1^\secparam$ and description $(a,b)$ of a function $f_{a,b} \in \fclass$, $\cp$ does the following:
    \begin{itemize}
        \item Parse $a$ as $k$ and $b$ as $vk||\sigma$. 
        %\item Sample $(pk, sk) \leftarrow \gen(1^\secparam)$ and compute $\sigma \leftarrow \sign(sk, 0)$, a unique signature on 0.
        \item Compute $\rho \leftarrow \ue.\enc(k, \sigma)$.
        \item Output $\tilde{\rho}= \rho \otimes \ket{vk}\bra{vk}$. 
    \end{itemize}
    \item{\bf Evaluation:} On input the security parameter $1^\secparam$, a value $x\in X$ and a copy-protected state $\tilde{\rho}$, $\eval$ does the following:
    \begin{enumerate}
        \item Measure the second register of $\tilde{\rho}$ to obtain the state $\rho \otimes \ket{vk}\bra{vk}$
        \item Compute $\sigma' \leftarrow \ue.\dec(x, \rho)$ and $\delta \leftarrow \ver(vk, 0, \sigma')$. 
        %Let $\rho'$ be the residual state left after computing $\sigma'$. Compute $\rho'' \leftarrow \ue.\enc(k, \rho')$.
        \item If $\delta = 0$, set $y = 0$; if $\delta = 1$, set $y = vk||\sigma'$. Output $ \ket{vk}\bra{vk} \otimes \ket{y}\bra{y}$.
    \end{enumerate}
    Using \autoref{cor:uncomputing}, we can reimplement the second and third steps above so that $\eval$ outputs a state $(\rho' \otimes \ket{vk}\bra{vk})\otimes 
    \ket{y}\bra{y}$, where $\rho'$ is close to $\rho$ on correct inputs. We assume that $\eval$ does this for reusability purposes.
\end{itemize}
\paragraph{Computational Correctness:} Our construction satisfies computational correctness  (\autoref{def:computationalcorrectness}). In order to find an input that evaluates incorrectly, an adversary must be able to forge a signature using only the verification key $vk$. We formalize this argument below:

\begin{claim}
Assuming unclonable-indistinguishability property of UE and the unforgeability property of the unique signature scheme, $(\cp, \eval)$ satisfies computational $(\negl(\secparam), \negl(\secparam))$-correctness.
\end{claim}
\begin{proof}
The first bullet point of the computational correctness property follows from the statistical correctness of $\ue$ and \autoref{cor:uncomputing}. This is because when the function is evaluated with the correct key ($a = k$), the decryption succeeds with probability $1 - \negl(\secparam)$, which implies that it is (almost) reversible. \\ \\
\noindent We prove the second bullet via proof by contradiction. Consider the following hybrid experiments: \\

\noindent $\hybrid_1$: This corresponds to the real correctness experiment, where the adversary receives as input a copy-protection of the point function $f_{a,b}$ and needs to find a value $x' \neq a$ such that the evaluation of the copy-protected state on the input $x'$ yields a non-zero value. 
Let the success probability of $\alice$ in this experiment be $\varepsilon_1$. \\

\noindent $\hybrid_2$: This hybrid is identical to $\hybrid_1$, except that we change the way we are computing the $\ue$ ciphertext. Instead of computing $\rho \leftarrow \ue.\enc(k,\sigma)$, we compute $\rho \leftarrow \ue.\enc(k,0)$. Let the success probability of $\alice$ in this experiment be $\varepsilon_2$. \\

\par We first argue that $|\varepsilon_1 - \varepsilon_2| \le \negl(\secparam)$. To prove this, we will construct an adversary $\alice'$ which tries to break the one-time indistinguishable security of $\ue$: \begin{itemize}
    \item $\alice'$ samples $(vk, sk) \leftarrow \gen(1^\secparam)$ and computes $\sigma \leftarrow \sign(sk, 0)$. She then sends two messages $m_0 = \sigma$ and $m_1 = 0$ to the challenger.
    \item The challenger samples $k \leftarrow \ue.\setup(1^\secparam)$ and a uniformly random bit $b$. He sends $\rho_\ct \leftarrow \ue.\enc(k, m_b)$ to $\alice'$.
    \item $\alice'$ sets $\rho_f^{(0)} = \rho_\ct \otimes \ket{vk}\bra{vk}$ and simulates the correctness experiment corresponding to $f_{k, (vk || \sigma)}$ by running $\alice$ and playing the role of the challenger in that experiment. She outputs 1 if $\alice$ succeeds; otherwise, she outputs 0.
\end{itemize} 

\noindent If $b = 0$, then $m_b = \sigma$ and the view of $\alice$ is $\hybrid_1$. Hence, $\alice'$ outputs 1 with probability $\varepsilon_1$. \\
\noindent On the other hand, if $b = 1$, then $m_b = 0$ and the view of $\alice$ is $\hybrid_2$, so that $\alice'$ outputs 1 with probability $\varepsilon_2$. \\
\noindent Therefore, by one-time indistinguishable security of $\ue$, it follows that $|\varepsilon_1 - \varepsilon_2| \le \negl(\secparam)$. \\
\par Secondly, we argue that $\varepsilon_2 \le \negl(\secparam)$ by constructing an adversary $\forger$ which tries to break the unforgeability property of the signature scheme: \begin{itemize}
    \item The challenger samples $(vk, sk) \leftarrow 
    \gen(1^\secparam)$ and sends $vk$ to $\forger$. 
    \item $\forger$ samples $k \leftarrow \ue.\setup(1^\secparam)$ and computes $\rho_\ct\leftarrow \ue.\enc(k, 0)$. He sets $\rho_f^{(0)} = \rho_\ct \otimes \ket{vk}\bra{vk}$ and simulates $\hybrid_2$ by running $\alice$ and playing the role of the challenger in that experiment. If there exists a query $x_i$ such that the answer $y_i = vk' || \sigma'$ to that query satisfies $y_i \ne 0$, then $\forger$ outputs $\sigma'$; otherwise, $\forger$ outputs $0$.
\end{itemize}
With probability $\varepsilon_2$, $\alice$ will succeed in the experiment $\hybrid_2$, and $\alice'$ will output $\sigma'$ such that $y_i = vk' || \sigma' $, where $\rho_f^{(1)} \otimes \ket{y_i}\bra{y_i} \leftarrow \eval(1^\secparam, \rho_f^{(i)}, x_i)$ for a query $x_i$.
\par Note that in our construction, even though the states $\left(\rho_f^{(j)}\right)_{j=0}^{\poly(\secparam)}$ could be different, they preserve the initial verification key $vk$. Hence, $\eval$ always runs signature verification using $vk$. Therefore, $\forger$ outputs a valid signature $\sigma'$ on $0$ with probability $\varepsilon_2$, so it is negligible by the unforgability of the signature scheme.

\end{proof}

\paragraph{Copy-Protection Security:}

\begin{claim}
The construction above is a secure copy-protection scheme assuming the one-time existential unforgeability property of the signature scheme $(\gen, \sign, \ver)$ and the unclonable-indistinguishable security of the unclonable encryption scheme $\ue$. 
\end{claim}
\begin{proof}
Suppose there exists an adversary $\abc$ that breaks the copy-protection security of our construction (see \autoref{def:copyprotectionpointfunction}). Let $\hybrid_1$ be the corresponding pirating experiment, where the challenger always sends $x = k$ (in the original pirating experiment, he sends this input only half the time). It follows that with non-negligible probability $p$, both $\bob$ and $\charlie$ output $(vk || \sigma)$ in $\hybrid_1$. In other words, given a copy-protected program $\rho_f$ for a point function $f_{k, (vk || \sigma)}$, $\alice$ can prepare a bipartite state on registers $B$ and $C$ such that on input $k$, both $\bob$ and $\charlie$ output $\sigma$ with probability $p$. (We ignore $vk$ in the output for simplified notation in this proof.) \\ \\
%\noindent $\hybrid_1$: real experiment. The probability that Bob and Charlie output $\sigma$ simultaneously is $\varepsilon$, which is non-negligible. 

\noindent We define a new experiment $\hybrid_2$, which is identical to $\hybrid_1$ except when the challenger is computing the copy-protected state $\rho_f = \ue.\enc(k, \sigma) \otimes \ket{vk}\bra{vk}$, he insteads computes $\rho_f' = \ue.\enc(k, 0) \otimes \ket{vk}\bra{vk}$ and sends it to $\alice$. \\ \\ 
We first argue that in $\hybrid_2$, the probability that either $\bob$ or $\charlie$ outputs $\sigma$ is negligible in $\secparam$. This follows from the fact that if w.l.o.g. $\bob$ outputs $\sigma$ with non-negligible probability, then there exists an adversary $\forger$ which breaks the unforgeability of the signature scheme: \begin{itemize}
    \item $\forger$ Given the security parameter $1^\secparam$ and $vk$ such that $(vk, sk) \leftarrow \gen(1^\secparam)$, $\forger$ samples a key $k \leftarrow \ue.\setup(1^\secparam)$ and computes $\rho \leftarrow \ue.\enc(k, 0)$.
    \item $\forger$ then runs $\abc$ by sending $\rho_f' = \rho \otimes \ket{vk}\bra{vk}$ to $\alice$ and simulating the experiment $\hybrid_2$. It outputs the output of $\bob$, which is a valid signature on $0$ with non-negligible probability.
\end{itemize}

\par Now we construct a cloning-distinguishing adversary $\abcprime$ which breaks the unclonable-indistinguishable security of $\ue$: \begin{itemize}
    \item In phase 1, $\alice'$ samples $(vk, sk) \leftarrow \gen(1^\secparam)$ and computes $\sigma \leftarrow \sign(sk, 0)$. She then sends messages $m_0 = \sigma$ and $m_1 = 0$ to the challenger.
    \item In phase 2, the challenger computes $\rho_{\ct} = \ue.\enc(k, m_b)$ for $k \leftarrow \ue.\setup(1^\secparam)$ and a uniformly random bit $b$. He sends $\rho_{\ct}$ to $\alice'$.
    \item $\alice'$ runs $\alice$ by sending $\rho_{\ct} \otimes \ket{vk}\bra{vk} $ as the copy-protected program and to create a bipartite state over registers $B,C$. She sends the $B$ ( resp., $C$) register to $\bob'$ (resp., $\charlie'$).
    \item In phase 3, the key $k$ is revealed to $\bob'$ and $\charlie'$. $\bob'$ then runs $\bob$ as if $x_B = k$ in the pirating experiment, similarly for $\charlie'$. Note that if $b=0$, the view of $\bob$ and $\charlie$ is exactly $\hybrid_1$ and if $b=1$ it is $\hybrid_2$. Let the output of $\bob$ and $\charlie$ be $y_B$ and $y_C$, respectively. In the end, $\bob'$ (resp., $\charlie'$) outputs the bit $b_B = 0$ if and only if $y_B = \sigma$ (resp., $y_C = \sigma$).
\end{itemize}
The probability that $\bob'$ and $\charlie'$ simultaneously predict the bit $b$ correctly is given by $$ \frac{1}{2} \left(\Pr[y_B = y_C = \sigma \given b = 0] + \Pr[y_B \ne \sigma \land y_C \ne \sigma \given b = 1]\right) \ge \frac{1}{2}(p + 1 - \negl(\secparam)) \ge \frac{1}{2} + \frac{p}{2} - \negl(\secparam), $$
thus breaking the unclonable-indistinguishable security.

%\begin{itemize}
   % \item Switch the UE encryption in the copy-protected state generation process to have encryption of 0. At least one of the registers $A$ and $B$ should still output $\sigma$. This follows from the UE-indistinguishability property. Let us say that this register is $A$.
    %\item Now, we can break the unforgeability property as follows. Give UE encryption of 0 to the adversary who splits the state into $A$ and $B$. Now evaluate $A$ on input the UE key to obtain the signature $\sigma$. 
%\end{itemize}
\end{proof}

\begin{comment}
\subsection{Copy-Protection Implies unclonable Encryption}

\begin{theorem}
Assuming the existence of a collision-resistant hash function $H: \{0,1\}^{2\secparam} \to \{0,1\}^\secparam$ and a computationally correct $\secparam$-unclonable copy protection scheme $(\cp, \eval)$ for the class of point functions $\mathcal{F} = \{f_k: k \in \{0,1\}^{2\secparam}\}$, there exists an unclonable encryption scheme. 
\end{theorem}

\begin{proof}
We construct an unclonable encryption scheme $\ue$:
\paragraph{Encryption: } 
\end{proof}
\end{comment}

% not currently included
%\appendix
%\input{unclonableindistinguishable}

\bibliographystyle{alpha}
\bibliography{crypto}

\appendix

%bibliography not cited directly
\nocite{PhysRevLett.79.2153}

\end{document}